\newcommand{\be}{\begin{equation}}
\newcommand{\ee}{\end{equation}}
\newcommand{\bea}{\begin{eqnarray}}
\newcommand{\eea}{\end{eqnarray}}
\newcommand{\ba}{\begin{aligned}}
\newcommand{\ea}{\end{aligned}}
\numberwithin{equation}{section}
\newcounter{thmcounter}
\numberwithin{thmcounter}{section}
\theoremstyle{definition}
\theoremstyle{plain}
\newtheorem{proposition}[thmcounter]{Proposition}
\newtheorem{theorem}[thmcounter]{Theorem}
\def\BC{\mathrm{BC}}                        %
\def\1{{\boldsymbol 1}}                     %
\def\0{{\boldsymbol 0}}                     %
\def\CC{{\mathbb C}}                    %
\def\RR{{\mathbb R}}                    %
\def\TT{\mathbb{T}}                          %
\def\XX{{\mathbb X}}                    %
\def\YY{{\mathbb Y}}
\def\cD{{\mathcal D}}                       %
\def\cG{{\mathcal G}}                       %
\def\cH{{\mathcal H}}                       %
\def\cM{{\mathcal M}}                       %
\def\cR{{\mathcal R}}                       %
\def\cW{{\mathcal W}}                      %
\def\g{{\mathfrak g}}
 \def\b{{\mathfrak b}}
 \def\k{{\mathfrak k}}
\def\SL{{\rm SL}}                           %
\def\GL{{\rm GL}}                           %
\def\SU{{\rm SU}}                           %
\def\U{{\rm U}}                                     %
\def\red{\mathrm{red}}                    %
\def\tr{\mathrm{tr\,}}                        %
\def\diag{\mathrm{diag}}                    %
\def\ri{{\rm i}}                            %
\def\tw {\tilde w}
\def\Inn {I}
\def\Herm{Herm}
\def\qHerm{qHerm}
\def\sl{{\rm sl}}                           %
\def\G{\Gamma}
\def\S{\Sigma}
\def\Lm{{\Lambda}}
\def\q{{\lambda}}
\newcommand{\bpm}{\begin{pmatrix}}
\newcommand{\epm}{\end{pmatrix}}
\def\cinf{C^\infty}
\def\half{\textstyle{\frac12}}
\def\dt {\left.\frac{d}{dt}\right|_{t=0}}
\DeclareMathOperator*{\rez}{Res}
\begin{document}

\begin{center}
{\large\bf
The action-angle dual of an integrable Hamiltonian system of
Ruijsenaars--Schneider--van Diejen type}
\end{center}

\medskip
\begin{center}
L.~Feh\'er${}^{a,b}$ and I.~Marshall${}^c$\\

\bigskip
${}^a$Department of Theoretical Physics, University of Szeged\\
Tisza Lajos krt 84-86, H-6720 Szeged, Hungary\\
e-mail:  lfeher@physx.u-szeged.hu

\medskip
${}^b$Department of Theoretical Physics, WIGNER RCP, RMKI\\
H-1525 Budapest, P.O.B.~49, Hungary

\medskip
${}^c$
Faculty of Mathematics, Higher School of Economics\\
 Ulitsa Vavilova 7, Moscow, Russia\\
 e-mail: imarshall@hse.ru

\end{center}

\medskip
\begin{abstract}
Integrable deformations of the hyperbolic and trigonometric $\BC_n$ Sutherland models
were recently derived via Hamiltonian reduction of certain free systems
on the Heisenberg doubles of
$\SU(n,n)$ and $\SU(2n)$, respectively.
As a step towards constructing action-angle variables for these models, we here
 apply the same reduction to a different free system on the double of $\SU(2n)$ and
 thereby obtain a novel integrable many-body model of Ruijsenaars--Schneider--van Diejen type
that is in action-angle duality with the respective deformed Sutherland model.
\end{abstract}

\newpage

\newpage
\section{Introduction}
\label{sec:0}

The study of integrable many-body models of Calogero--Moser--Sutherland type
began with the  seminal  papers \cite{Cal,Suth,Mos} and has since been enriched by several
contributions, including notably the generalization to arbitrary root systems
by Olshanetsky and Perelomov \cite{OP1}, and the discovery of relativistic deformations
by Ruijsenaars and Schneider \cite{RS86} developed further by van Diejen \cite{vD1} and others.
These models are ubiquitous in physical applications and are connected to
important fields of mathematics; see the reviews
\cite{N,Banff,vDV,SuthR,PolR, EtiR}.

At the classical level,   these
models exhibit intriguing action-angle duality relations \cite{SR88,RIMS95}.
The duality of two integrable many-body models means that the position variables
of one model serve also as the action variables of the other one, and vice versa.
The pioneering work of Ruijsenaars \cite{SR88,RIMS95} relied on direct methods, building on
and greatly generalizing a procedure
that  had appeared in the  Hamiltonian reduction treatment of the simplest example
\cite{KKS}.
By now it has become  widely known \cite{JHEP,FK1}  that several dual pairs of models
arise  by applying
Hamiltonian reduction to suitable pairs of ``free systems'' on a higher dimensional
master phase space, and, whenever available,  this interpretation provides a powerful tool
for the analysis of the dual pairs.
The term  \textit{free system} is a loose one:   a free Hamiltonian induces a complete flow,
which often can be written down explicitly,
and  participates in a large Abelian Poisson algebra invariant
under a group of symmetries.

The goal of this paper is to exhibit action-angle duality
for an integrable Ruijsenaars--Schneider--van Diejen (RSvD) type model derived
 recently \cite{M,FG} by Hamiltonian reduction of the Heisenberg double \cite{STS}
of the Poisson Lie group $\SU(2n)$.
The model in question has three free parameters and is a deformation of the trigonometric
$\BC_n$ Sutherland model.  It  can be viewed also as a singular limit of a specialization
of the five-parameter deformation due to van Diejen \cite{vD1}.
Its derivation \cite{FG} closely followed the  analogous reduction \cite{M}
of the Heisenberg double of $\SU(n,n)$.
The papers \cite{M, FG} (see also \cite{FG1,M1}) applied Poisson-Lie analogues of the reduction
of the cotangent bundle of $\SU(n,n)$ that yields the hyperbolic $\BC_n$
Sutherland model with three arbitrary coupling constants \cite{FP}.   Other relevant predecessors
of the present work are the paper of Pusztai \cite{P},
 where the action-angle
dual of the hyperbolic $\BC_n$ Sutherland model was constructed by reduction
of $T^* \SU(n,n)$,  and its adaptation \cite{FG0} to the trigonometric case.

A key ingredient of every Hamiltonian reduction is the choice of
symmetry group, which in the above examples is the group $K_+ \times K_+$ with
$K_+ = \SU(n,n)\cap \SU(2n)$.
The pertinent
Heisenberg doubles carry two natural $(K_+\times K_+)$-symmetric free systems,
and  the previous works investigated reductions of those systems corresponding to
geodesic motion.
In the present article, we analyse the same reduction of the Heisenberg double of
$\SU(2n)$ as in \cite{FG}, but develop a new model of the reduced phase space,
wherein it is the other free system whose reduction admits a many-body interpretation.
In combination with the  earlier  results,  this allows us
to establish action-angle duality between the model treated in \cite{FG}
and the  many-body model that we obtain here.
The Hamiltonians of this pair of RSvD type
models are given in equations (\ref{distilham}) and (\ref{Z9}) below, and their duality with
one another is discussed in Section \ref{discussion}.

As in  \cite{M}, we  adopt the modest aim of
finding a model for a dense open subset of the reduced phase space.
  Full description of the complete reduced
 phase space will be  reported in another publication. It is worth emphasizing
that the investigation of the global structure of the
phase space emerging from Hamiltonian reduction can be
a source of rich and surprising results. An example is the study by Wilson \cite{W}
of the adelic Grassmannian related to the complexified rational Calogero--Moser system, which opened up
interesting connections
between commuting KP flows and bispectral operators.
It is also worth noting that a global description is necessary in order to obtain
complete flows after reduction, and this can be turned around
 to construct natural regularizations of several systems  with singularities.

  Section 2 is devoted to preparations.
 The two families of free Hamiltonians
 and their Hamiltonian vector fields are characterized in Proposition 2.1,
 and the reduction of interest is defined in Subsection 2.3.
 Our main new results are summarized  by Theorem 3.2 and Theorem 3.3
 in Section 3.  These describe Darboux coordinates on the reduced phase space
  in which the simplest reduced Hamiltonian descending from the
 second free system acquires an RSvD form.
 Proposition 3.1 formulates a  technical result that plays a key role in our analysis.
 In Section 4 we exhibit the action-angle duality
 between the reduced system derived in \cite{FG} and the one
 treated here for the first time. Finally, in Appendix A  the
 rational limit is presented of our RSvD type Hamiltonian (\ref{distilham}).

\section{Preliminaries}

We here collect the necessary definitions and background results that will be used later.
Most of these results are fairly standard and can be found  in many sources
(see e.g.~\cite{STSlectures}).

{\subsection{Group actions and invariants}}

 Our master phase space will be $\cM=\SL(2n,\CC)$,  treated as a real manifold.
 Let $K=\SU(2n)$, and $B$ the group consisting of the
upper triangular elements of $\SL(2n,\CC)$ with real, positive diagonal entries.
We shall use the notation $B_n$ for the analogous subgroup of $\GL(n,\CC)$.
 By the procedure of  Gram--Schmidt orthogonalisation, we may write any  $g\in \SL(2n,\CC)$ in the form
\be\label{lrdecomp}
g = k_L b_R
\ee
with unique $k_L\in K$ and $b_R\in B$. Equivalently, we  may write, with $k_R\in K$ and $b_L\in B,$
\be\label{rldecomp}
g=b_L k_R.
\ee
For present purposes, we favour the use of the $g=k_Lb_R$ decomposition and shall often drop the subscripts,
denoting the components simply as $(k,b)\in K\times B$. The natural left-multiplication action of $K$ on $\cM$
generates the ``left-handed'' action on $K\times B$ by
\be\label{leftaction}
f\underset{L}{*}(k,b) = (fk,b),\quad f\in K.
\ee
The natural right-multiplication action of $K$ on $\cM$ generates the ``right-handed'' action on $K\times B$ by
\be\label{rightaction}
f\underset{R}{*}(k,b) = (k',b') \quad\hbox{with}\quad k'b'=kbf^\dag\,.
\ee
Let us introduce the
matrix\footnote{The symbol $\1_n$ stands for the $n\times n$ identity matrix and later $id$  will stand for $\1_{2n}$.}
 $\Inn:=\diag(\1_n,-\1_n)$ and define
\be
K_+ = {\rm S}(\U(n)\times \U(n))=\{k\in K\,|\, k^\dag \Inn k=\Inn\}.
\ee

Suppose that $b\in B$ and $f\in K$. Then there exists a unique $\tilde f\in K$ such that $\tilde f b f^\dag\in B$,
and hence we get
\be\label{rightdress}
f\underset{R}{*}(k,b)  = (k\tilde f^\dag,\tilde fbf^\dag).
\ee
Moreover, this formula restricts to $K_+$, in the sense that $f\in K_+ \Leftrightarrow\tilde f\in K_+$.
The first claim is a direct consequence of the property of universal factorisation,
while the second can be checked by writing $b$ in block form, and then looking at each component separately.

The left-handed and right-handed actions naturally engender an action of $K_+ \times K_+$,
 and we shall be interested in the ring of the smooth real functions on $\cM  \simeq K\times B$
 which are invariant under this action.
 To obtain such functions,  for
${\Herm}:=\{X\in {\CC}^{2n \times 2n}
\,|\, X^\dag= X\}$ and ${\qHerm}:=\{X\in  {\CC}^{2n \times 2n}
\,|\, X^\dag=\Inn X\Inn\}$,
we introduce the maps $\Omega: \cM\rightarrow {\Herm}$ and $L: \cM\rightarrow {\qHerm}$, defined by\\
\be\label{OmLdefs}
\ba
\Omega(kb) &= bb^\dag, \\
L(kb)&=k^\dag \Inn k\Inn.
\ea
\ee
Clearly $\Omega$ and $L$ are invariant with respect to the left-handed action of $K_+$ on $\cM$.
With respect to the right-handed
action, from (\ref{rightdress}),
\be
\ba
\Omega(gf^\dag) &= \tilde f\Omega(g)\tilde f^\dag,\\
L(gf^\dag) &= \tilde fL(g)\tilde f^\dag.
\ea
\ee
From this observation there follows directly that,
with respect to the obvious conjugation  actions of $K_+$ on ${\Herm}$ and on ${\qHerm}$,
\be
\ba
&\Omega^{-1}\left(\cinf({\Herm})^{K_+}\right)\subset\cinf(\cM)^{K_+\times K_+},\\
&L^{-1}\left(\cinf({\qHerm})^{K_+}\right)\subset\cinf(\cM)^{K_+\times K_+}.
\ea
\label{invariance}\ee

Having in mind our later purpose,
we next introduce a mapping $w: \cM\rightarrow{\CC}^{2n}$ as follows. Let $\hat w\in{\CC}^{2n}$, and assume that
 $I\hat w=\hat w$; that is
 \be
 \hat w = \bpm \hat v\\0\epm,  \quad\hbox{for some fixed} \quad \hat v\in{\CC}^n.
 \label{hatw}\ee
 Define
\be\label{defw}
w(kb)=k^\dag\hat w.
\ee
From (\ref{rightdress}) we have, with respect to the right-handed action of $K_+$ on $\cM$,
\be\label{rightdressw}
w(gf^\dag) = \tilde f w(g),\qquad\forall f\in K_+,
\ee
whilst, with respect to the left-handed action of $K_+$ on $\cM$, we have the tautologous statement
\be
w(fg) = w(g),\qquad\forall f\in K_+(\hat w),
\ee
where
\be
K_+(\hat w) = \{f\in K_+\ |\ f\hat w=\hat w\}.
\label{K+w}\ee
An important relation between $L$ and $w$ ---due to the condition $I\hat w=\hat w$--- is the self-evident
\be\label{LIwrel}
L \Inn w = w.
\ee

\subsection{Poisson structure and symmetries}\label{PBandsymm}

The group decomposition $\SL(2n,{\CC})\simeq K\times B$ results in the Lie algebra decomposition
$\sl(2n,{\CC})\simeq \operatorname{Lie}(K)+\operatorname{Lie}(B)$,
and the two subalgebras $\k:=\operatorname{Lie}(K)$ and $\b:=\operatorname{Lie}(B)$ are in natural duality with one another with respect to
the invariant nondegenerate inner product on $\g:=\sl(2n,{\CC})$
\be\label{innerproduct}
\langle X,Y\rangle = \mathrm{Im}\,\tr XY,\qquad X,Y\in \g.
\ee
Consequently,  $\cM$ acquires the structure of Heisenberg double in the standard way \cite{STS}.
 That is,  $C^\infty(\cM)$ carries the (non-degenerate) Poisson bracket given by
\be\label{topPB}
\{\varphi,\psi\}(g) = \langle \nabla_g\varphi , \cR \nabla_g\psi\rangle +
\langle \nabla_g'\varphi , \cR \nabla_g'\psi\rangle,
\ee
using $ \cR \in \mathrm{End}(\g)$
provided by
half the difference of two projections, $\cR=\half(P_\k - P_\b)$, and
 $\nabla_g\varphi,\ \nabla_g'\varphi\in\g$ characterized by
\be\label{Ddefs}
\dt \varphi(e^{tX}ge^{tY}) = \langle X,\nabla_g\varphi\rangle +  \langle Y,\nabla'_g\varphi\rangle,
\qquad\forall X,Y\in\g.
\ee
With respect to this extra structure, the left-handed and right-handed actions of $K$ on $\cM$ are Poisson actions
with momentum maps $g\mapsto b_L$ and $g\mapsto b_R^{-1}$ defined by   (\ref{lrdecomp}) and (\ref{rldecomp}).

In fact, $K_+$ is a Poisson Lie subgroup of $K$ and its dual group can
be identified with $B/N$, where $N\subset B$ is the
normal subgroup of matrices having the block form,
\be\label{Ndef}
N:=\left\{\left. \bpm \1_n & X\\ 0 & \1_n\epm\right|\ X\in {\CC}^{n \times n}
\right\}.
\ee
Denoting the projection $B\rightarrow B/N$ by $\pi_N$, the momentum maps generating the left-handed and
 right-handed actions of $K_+$ on $\cM$ are respectively the maps
\be\label{Kplusmomenta}
\ba
&b_Lk_R=g\mapsto \pi_N(b_L),\\
&k_Lb_R = g\mapsto \pi_N(b_R^{-1}).
\ea
\ee

\begin{proposition}
The functions $F_l$ and $\Phi_l$, defined by
\be\label{FPhidefs}
\ba
F_l(g)&=\frac1{2l}\tr  \Omega(g)^l ,\\
\Phi_l(g)&=\frac1{2l}\tr  L(g)^l ,
\ea
\qquad l=1,2,\dots
\ee
are all invariant with respect to the action of the symmetry group $K_+\times K_+$. They form two separate
families of functions in involution on $\cM$; that is
\be
\{F_{l_1},F_{l_2}\}=0,\qquad\forall l_1,l_2,
\ee
and
\be
\{\Phi_{l_1},\Phi_{l_2}\} =0,\qquad\forall l_1,l_2.
\ee
The Hamiltonian vector field corresponding to $F_l$ is expressed in terms of the $K$ and $B$ components by
\be\label{Fflows}
{\XX}_{F_l}(g)\ :\
\left\{\ba
\dot k &= \ri k\,[\Omega^l - \nu_l\,id\,]\\
\dot b&=0
\ea\right.
\ ,
\qquad
\hbox{with}\quad\nu_l=(2n)^{-1}\tr\Omega^l.
\ee
The Hamiltonian vector field corresponding to $\Phi_l$ is expressed in terms of the $K$ and $B$ components by
\be\label{Phiflows}
\ba
{\XX}_{\Phi_l}(g)\ :\
&\left\{\ba
\dot k &=  \half \ri k(\Inn L^{l-1} - L^{l+1}\Inn - \Inn L^l + L^l\Inn)\\
\dot b &= \half \ri(id+\Inn)  L^l (id-\Inn)b.
\ea\right.\\
\ea
\ee
 Each of these vector fields generates a complete flow on $\cM$.
\end{proposition}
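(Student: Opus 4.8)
The plan is to reduce every assertion to the computation of the left and right gradients $\nabla_g$, $\nabla'_g$ of $F_l$ and $\Phi_l$, and then to feed these into the Heisenberg-double expression for the Hamiltonian vector field. Invariance under $K_+\times K_+$ requires no work: the map $x\mapsto\frac1{2l}\tr x^l$ is invariant under conjugation by $\U(2n)\supset K_+$ both on $\Herm$ and on $\qHerm$, so $F_l=\frac1{2l}\tr\Omega^l$ and $\Phi_l=\frac1{2l}\tr L^l$ are pullbacks of $K_+$-invariant functions by $\Omega$ and $L$, and (\ref{invariance}) applies.

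Next I would record the identity $\nabla'_g\varphi=\mathrm{Ad}_{g^{-1}}\nabla_g\varphi$, valid for every $\varphi$ (write $e^{tX}ge^{tY}=e^{t(X+\mathrm{Ad}_gY)}g+O(t^2)$ and compare with (\ref{Ddefs})). Because $\k$ and $\b$ are isotropic for $\langle\,,\rangle$ one has $P_\k^\ast=P_\b$, hence $\cR^\ast=-\cR$; together with $\cR+\half\,id=P_\k$, $\cR-\half\,id=-P_\b$, and $(\nabla_g\varphi)\,g=g\,(\nabla'_g\varphi)$, substitution into (\ref{topPB}) collapses the bracket to a single pairing $\langle\,\cdot\,,\nabla_g\psi\rangle$ and yields the compact form $\XX_\varphi(g)=P_\k(\nabla_g\varphi)\,g-g\,P_\b(\nabla'_g\varphi)$, i.e. $\dot k=P_\k(\nabla_g\varphi)k$, $\dot b=-b\,P_\b(\nabla'_g\varphi)$; the overall sign is pinned by consistency with (\ref{Fflows}). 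For $F_l$ I use $\tr\Omega^l=\tr(gg^\dag)^l$ to differentiate directly, obtaining $\nabla_g F_l=\ri\big[(gg^\dag)^l-\nu_l\,id\big]=\ri k[\Omega^l-\nu_l\,id]k^\dag$ and $\nabla'_g F_l=\ri\big[(g^\dag g)^l-\nu_l\,id\big]$, both in $\k$, with $\nu_l$ forced by tracelessness. For $\Phi_l$, differentiating $L=k^\dag\Inn k\Inn$ and using $k^\dag\dot k=P_\k(k^\dag Xk+bYb^{-1})$ gives $\dt\Phi_l(e^{tX}ge^{tY})=\half\tr\big((\Inn L^l\Inn-L^l)\,P_\k(k^\dag Xk+bYb^{-1})\big)$; the $\ri$-trick (rewriting a real trace as $\mathrm{Im}\,\tr(\ri\,\cdot\,)$) and $P_\k^\ast=P_\b$ then produce $\nabla_g\Phi_l=\half\,k\,P_\b\big(\ri(\Inn L^l\Inn-L^l)\big)k^\dag$ and $\nabla'_g\Phi_l=\half\,b^{-1}P_\b\big(\ri(\Inn L^l\Inn-L^l)\big)b$.

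Feeding the $F_l$ gradients into the vector-field formula gives $\dot b=0$ at once (as $\nabla'_gF_l\in\k$ is annihilated by $P_\b$) and $\dot k=\ri k[\Omega^l-\nu_l\,id]$, i.e. (\ref{Fflows}). For $\Phi_l$ I would exploit that $L$ is unitary and quasi-Hermitian, so $\Inn L^l\Inn=L^{-l}=(L^l)^\dag$; this forces the diagonal blocks of $L^l$ to be Hermitian and $(L^l)_{21}=-\big((L^l)_{12}\big)^\dag$, whence $P_\b$ of the Hermitian matrix $\ri(L^{-l}-L^l)$ collapses to $-\ri(id+\Inn)L^l(id-\Inn)$. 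Substituting into $\dot b=-b\,P_\b(\nabla'_g\Phi_l)$ reproduces the $\dot b$ in (\ref{Phiflows}), and evaluating the $\k$-projection of the $\mathrm{Ad}_k$-conjugate gives the stated $\dot k$. Involution is then immediate for the first family, since all of $\nabla_gF_{l_i},\nabla'_gF_{l_i}$ lie in the isotropic subspace $\k$, so every term of (\ref{topPB}) vanishes; for the second family the $\nabla'$-terms vanish because $\nabla'_g\Phi_{l_i}\in\b$ (indeed in $\operatorname{Lie}(N)$, as $\mathrm{Ad}_{b^{-1}}$ preserves $\operatorname{Lie}(N)$), while the surviving $\nabla$-term is dispatched by the same quasi-Hermitian bookkeeping.

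Completeness is the easiest part. For $F_l$ the equation $\dot b=0$ keeps $\Omega$ constant, so $k(t)=k\exp\!\big(t\,\ri[\Omega^l-\nu_l\,id]\big)$ is defined for all $t\in\RR$. For $\Phi_l$ the decisive observation is that $\dot k=P_\k(\nabla_g\Phi_l)k$ depends on $k$ alone (through $L$); since $K$ is compact this $k$-flow is automatically complete, and $b$ then solves the linear non-autonomous equation $\dot b=-\half P_\b\big(\ri(\Inn L(t)^l\Inn-L(t)^l)\big)b$ whose coefficient, being a function of the unitary $L(t)$, is bounded on every interval, so $b(t)$ exists for all $t$. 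I expect the main obstacle to be precisely the closed-form evaluation of the projections for the $\Phi_l$ flow --- turning $\half kP_\b\big(\ri(\Inn L^l\Inn-L^l)\big)k^\dag$ into the explicit matrix in (\ref{Phiflows}) --- together with the companion cancellation of the $\nabla$-term in $\{\Phi_{l_1},\Phi_{l_2}\}$; both hinge on $\Inn L^l\Inn=L^{-l}$ and the block structure it imposes, and this is where essentially all the computation sits.
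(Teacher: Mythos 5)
Your proposal is correct and follows essentially the same route as the paper: compute the two gradients, feed them into the $\cR$-matrix form of the bracket (collapsing it via $\nabla_g\varphi\, g = g\,\nabla'_g\varphi$), use the key facts that $\nabla'_g\Phi_l\in\b$ and that $P_\b(X)=\half(id+\Inn)X(id-\Inn)$ whenever $X^\dag=-\Inn X\Inn$, and obtain completeness from the compactness of $K$ together with the linearity of the resulting $b$-equation. The one step you leave as ``quasi-Hermitian bookkeeping'' --- the vanishing of the unprimed term in $\{\Phi_{l_1},\Phi_{l_2}\}$ --- is closed most economically the way the paper does it, namely by reading off from the derived flow (\ref{Phiflows}) that $\dot L=\half\ri\,[\,2L^l-L^{l-1}-L^{l+1},\Inn]$ is a commutator of $\Inn$ with a polynomial in $L$, so that every $\tr L^m$ is conserved along each $\Phi_l$-flow.
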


\begin{proof}
For both families, ($K_+\times K_+$)-invariance is obvious from (\ref{invariance}), and the involutivity properties may be deduced
directly from the forms
 of the respective Hamiltonian vector fields.
 The formula  for ${\XX}_{F_l}$ is obtained by straightforward application of the
 definitions (\ref{topPB}) and (\ref{Ddefs}).
 The derivation  for ${\XX}_{\Phi_l}$ is more lengthy, proceeding via the observation
  that $\nabla'_g\Phi_l\in\b$, which implies that $\dot g g^{-1}= -[\nabla_g\Phi_l]_\b$, and this
  can be written explicitly utilizing the fact that
 $X^\dag = -\Inn X\Inn$ entails  $X_\b := P_\b(X) = \half(id+\Inn)X(id-\Inn)$.
 The completeness property of the flow of $\XX_{F_l}$ is plain, while for $\XX_{\Phi_l}$ it follows
 by appeal to the compactness of $K$, using that $\dot{b} b^{-1}$ in (\ref{Phiflows}) depends only on $k$.
 \end{proof}

It will be important for us to have the projections of ${\XX}_{F_l}$ and ${\XX}_{\Phi_l}$ expressed in terms of $L$,
$\Omega$ and $w$. These follow directly from (\ref{Fflows}) and (\ref{Phiflows}), using (\ref{OmLdefs}) and (\ref{defw}),
and are respectively given by
\be\label{FflowsLOmw}
{\XX}_{F_l}(g)\quad\Rightarrow\quad
\left\{\ba
\dot L\Inn&=[L\Inn,\ri\Omega^l]\\
\dot\Omega&=0\\
\dot w &= -\ri[\Omega^l - \nu_l\,id\, ]w
\ea\right.
\phantom{\qquad\quad \hbox{with}\quad\nu_l=(2n)^{-1}\tr\Omega^l}
\ee
and
\be\label{PhiflowsLOmw}
X_{\Phi_l}(g)\quad\Rightarrow\quad
\left\{\ba
\dot L&=  \half \ri [ 2L^l - L^{l-1}-L^{l+1},\Inn]\\
\dot\Omega & =\half \ri(id+I)L^l(id-I)\Omega + {\half}\ri\Omega(id-I)L^l(id+I)\\
\dot w &=  \half \ri(id+\Inn)(L^l-L^{l-1})w.
\ea\right.
\ee

\subsection{ Reduction of the systems $\{F_l\}$ and $\{\Phi_l\}$}\label{reductionofFPhisystems}

In principle, one can perform reduction by setting the diagonal $n\times n$ blocks of $b_L$ and $b_R$
to arbitrary constants, elements of $B_n$, and then projecting to the quotient of
the resulting momentum constraint surface, $\cM_0$,
by the isotropy subgroup
 in $K_+\times K_+$ corresponding to the constraints.
The quotient, the reduced phase space $\cM_\red$, is naturally a smooth symplectic manifold
if standard regularity conditions are met (see e.g.~\cite{Lu}).
The functions $F_l$ and $\Phi_l$ then descend to smooth functions $F_l^\red$ and $\Phi_l^\red$
on $\cM_\red$ forming Abelian Poisson algebras with respect to the reduced symplectic structure.
The isotropy group of the constraints is also known as the \textit{gauge group},
and  the associated transformations of $\cM_0$ are often called \textit{gauge transformations}.

The following result gives us a device (used already in \cite{M,M1})
whereby the momentum constraints are expressed as explicit functions of $g\in\cM$.
The proof is a simple exercise.

\medskip
\begin{proposition}\label{momentummap}
Suppose $\mu_1, \mu_2, \tilde \mu_1, \tilde\mu_2 \in B_n$ are given.  The condition
\be
\ba
\cM\owns g = k_Lb_R \quad\hbox{with}\quad
b_R = \bpm \mu_1 & * \\ 0 & \mu_2\epm
\ea
\ee
is equivalent to
\be
g^\dag g - g^\dag g\bpm (\mu_1^\dag \mu_1)^{-1}&0\\0&0\epm g^\dag g = \bpm 0&0\\0& \mu_2^\dag \mu_2\epm,
\ee
 and the condition
\be
\ba
\cM\owns g = b_Lk_R \quad
\hbox{with}\quad
b_L = \bpm \tilde\mu_1 & * \\ 0 & \tilde\mu_2\epm
\ea
\ee
is equivalent to
\be
gg^\dag - gg^\dag \bpm 0 &0\\0& (\tilde\mu_2\tilde \mu_2^\dag)^{-1}\epm gg^\dag  =
\bpm \tilde\mu_1\tilde \mu_1^\dag &0\\0& 0 \epm.
\ee
\end{proposition}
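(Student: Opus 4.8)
The plan is to reduce both equivalences to a single block computation together with the uniqueness of the Cholesky-type factorisation on $B_n$; I describe the first equivalence in detail, the second being entirely analogous. The remark that drives everything is that, since $k_L\in K=\SU(2n)$ is unitary, the decomposition $g=k_Lb_R$ gives $g^\dag g=b_R^\dag b_R=:M$, a positive definite Hermitian matrix. Writing the (always available) factor in block form as $b_R=\bpm d_1 & c\\ 0 & d_2\epm$ with $d_1,d_2\in B_n$ and $c\in\CC^{n\times n}$, a direct multiplication gives $M_{11}=d_1^\dag d_1$ and, using $d_1(d_1^\dag d_1)^{-1}d_1^\dag=\1_n$, the identity
\be
M_{22}-M_{21}M_{11}^{-1}M_{12}=d_2^\dag d_2
\ee
for the Schur complement of the upper-left block (note that $c$ has dropped out).

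Next I would expand the left-hand side of the asserted identity blockwise. With $P:=(\mu_1^\dag\mu_1)^{-1}$ one finds
\be
M-M\bpm P&0\\0&0\epm M=\bpm M_{11}-M_{11}PM_{11} & M_{12}-M_{11}PM_{12}\\ M_{21}-M_{21}PM_{11} & M_{22}-M_{21}PM_{12}\epm .
\ee
Equating this to $\bpm 0&0\\0&\mu_2^\dag\mu_2\epm$ unwinds as a short chain of conditions: the upper-left block gives $M_{11}(\1_n-PM_{11})=0$, and invertibility of the principal submatrix $M_{11}$ forces $M_{11}=\mu_1^\dag\mu_1$; with $PM_{11}=\1_n$ in hand the two off-diagonal blocks vanish automatically, and the lower-right block collapses, through the Schur-complement identity, to the single requirement $d_2^\dag d_2=\mu_2^\dag\mu_2$. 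Thus the matrix equation is equivalent to the pair $d_1^\dag d_1=\mu_1^\dag\mu_1$ and $d_2^\dag d_2=\mu_2^\dag\mu_2$.

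To finish I would invoke the uniqueness lemma on $B_n$: if $\mu,\nu\in B_n$ satisfy $\mu^\dag\mu=\nu^\dag\nu$, then $u:=\nu\mu^{-1}$ is simultaneously unitary and upper triangular with positive diagonal, so $u=\1_n$ and $\mu=\nu$. Hence the displayed pair of conditions is equivalent to $d_1=\mu_1$ and $d_2=\mu_2$, i.e.\ to the statement that the diagonal blocks of $b_R$ are the prescribed $\mu_1,\mu_2$; this establishes the first equivalence. The second one follows in the same way from $gg^\dag=b_Lb_L^\dag$, projecting instead onto the lower-right block: writing $b_L=\bpm e_1 & c'\\ 0 & e_2\epm$, the lower-right block of $gg^\dag$ is $e_2e_2^\dag$ and the corresponding Schur complement is $e_1e_1^\dag$, and one appeals to the uniqueness of $\mu\mapsto\mu\mu^\dag$ on $B_n$. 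I foresee no genuine obstacle: beyond the Schur-complement bookkeeping the sole ingredient is the one-line factorisation uniqueness, so the statement is indeed the advertised simple exercise.
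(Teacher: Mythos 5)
Your proof is correct and complete: the blockwise expansion, the Schur-complement identity $M_{22}-M_{21}M_{11}^{-1}M_{12}=d_2^\dag d_2$, and the uniqueness of the triangular factor with positive diagonal are exactly the ingredients needed. The paper itself gives no argument (it declares the proposition "a simple exercise"), and your write-out is the natural realization of that exercise, so there is nothing to contrast it with.
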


\medskip
In the present  work, we study reduction under the following constraints.
We choose real, positive numbers $x,y,\alpha$, supposing additionally that $\alpha<1$, and then
fix  the constraint surface $\cM_0$ by
\be\label{constraint}
\cM_0 := \left\{g\in\cM\ \left| \ b_R = \bpm x\1_n &*\\ 0& x^{-1}\1_n\epm,\
b_L = \bpm y^{-1}\sigma &*\\ 0& y\1_n \epm\right\}\right. ,
\ee
where $\sigma$ is an element of $B_n$, defined in relation to the previously
chosen vector $\hat v$ in (\ref{hatw}) by the property that
\be\label{sigmadef}
\sigma\sigma^\dag = \alpha^2 \1_n + \hat v\hat v^\dag.
\ee
This presupposes the condition on the fixed vector $\hat v$ that
$|\hat v|^2=\alpha^2(\alpha^{-2n} -1)$,
thus ensuring  that $\det(\sigma)=1$.
The right-hand part of the corresponding isotropy group is  the whole of $K_+$.
The left-hand part of the isotropy group, denoted $K_+(\sigma)$  (since it depends only
on the choice of the element $\sigma$), is
the direct product
\be
K_+(\sigma)= K_+(\hat w) \times {\TT}_1,
\ee
with $K_+(\hat w)$ in (\ref{K+w}) and with ${\TT}_1$ given by
\be
{\TT}_1 := \{\hat \gamma:= \diag(\gamma \1_n, \gamma^{-1} \1_n)\mid \gamma\in \U(1)\}.
\label{T1}\ee
Here, the ${\TT}_1$ factor of $K_+(\sigma)$ acts
on the vector $w$ (\ref{defw}) according to the rule
\be
\hat \gamma: w \mapsto \gamma^{-1} w.
\label{gamma}\ee
The task is to characterize the quotient,
\be
\cM_\red := \cM_0/(K_+(\sigma) \times K_+).
\label{Mred}\ee

The approach followed in \cite{FG} mimics that of \cite{M,FP}, and results in a model of $\cM_\red$
(proved in \cite{FG} to be  a smooth manifold)
for which the functions $F_l^\red$ are presented as a collection of
interesting commuting Hamiltonians, and the
$\Phi_l^\red$ are trivial.
It proceeds, after imposing the constraints, by using the isotropy subgroups
for both the left-handed and right-handed actions
to bring  $k$
to the form
\be  k=
\bpm \varrho&0\\0& \1_n\epm
\bpm \cos(q)&\ri\sin(q)\\ \ri\sin(q)&\cos(q) \epm
\quad\hbox{with}\quad q =\diag(q_1,\dots, q_n),\quad \varrho\in \mathrm{SU}(n).
\label{bLform}\ee
In essence, the result develops from finding the
explicit dependence of  the matrix $\Omega$ as a function of $L$, i.e.~of $q$,  and of conjugate variables,
such that the constraint is satisfied.

Alternatively, in the current article
we shall look for a model of the reduced phase space for which the functions $\Phi_l^\red$
form a set of interesting commuting
Hamiltonians and the $F_l^\red$ are trivial. This is achieved by using the right-hand isotropy subgroup
to bring $\Omega$ to blockwise diagonal form, following which the reduction proceeds by
representation, via constraints,
of the matrix $L$ as a function of $\Omega$ and of canonically conjugate variables.
Our objective in the next section is to elaborate this statement in detail.

\section{Analysis of the reduced system}

We start with the observation that, for any $g= kb$ from the constraint surface $\cM_0$,
the right-handed action of $K_+$ may be used to bring $b$ to the form
\be\label{bRform}
b= \bpm x \1_n&\beta\\0&x^{-1}\1_n\epm \quad \hbox{with } \beta =
\diag(\beta_1,\dots,\beta_n),\quad \beta_i\in{\RR},\,\,\, \beta_1\geq\dots\geq\beta_n\geq0.
\ee
This is an application of the standard singular value decomposition of $n\times n$ complex matrices.
The $\beta_i$ are invariants on $\cM_0$ with respect to the full gauge group $K_+(\sigma) \times K_+$.
Now the idea is to introduce a partial gauge fixing where $b$ has the above form,
and label  the points of $\cM_\red$ (\ref{Mred}) by
the $\beta_i$ together with  further
invariants with respect to the residual gauge transformations.
In what follows we \textit{assume}  that
\be
\beta_1 > \beta_2 > \dots > \beta_n >0.
\label{An2}\ee
Then the residual gauge group is $K_+(\sigma) \times {\TT}_{n-1}$, where ${\TT}_{n-1}$ contains
the matrices of the form
$\diag(\tau, \tau)$, with $\tau = \diag(\tau_1,\dots, \tau_n)$ and $\tau_k\in \U(1)$ subject to the
condition $\prod_{k=1}^n \tau_k^2 =1$.
It is readily seen that,
 with $w=w(g)$ defined in (\ref{defw}), the triple $(\beta, w, L)$ provides a complete set of invariants with
respect to the factor $K_+(\hat w)$ of the residual gauge group.
After factoring this out,  we combine the residual right-handed  gauge group ${\TT}_{n-1}$ and
the factor ${\TT}_1$ (\ref{T1}) of $K_+(\sigma)$,
which acts by (\ref{gamma}),  into the $n$-torus
\be
{\TT}_n = \{ T = \diag(\tau, \tau) \mid \tau=\diag(\tau_1,\dots, \tau_n),\,\,\, \tau_i\in \U(1)\}.
\label{Tn}\ee
The residual gauge transformation by $T\in {\TT}_n$
 acts on the triple $(\beta, w, L)$ according to
\be
T: (\beta, w, L) \mapsto (\beta, T w,  T L T^\dagger).
\label{Tact}\ee

In the next subsection,  we solve the constraint condition and express
$w$ and $L$, up to the gauge action (\ref{Tact}), in terms of $\beta$ and
further invariants.
In Subsections \ref{PBandsymm} and \ref{reductionofFPhisystems} we construct
Darboux coordinates on $\cM_\red$
and determine the form of the reduced Hamiltonian $\Phi_1^\red$ in terms of them.

The assumption (\ref{An2})
can certainly be made by restriction
to an open subset of $\cM_0$.
We shall adopt further similar  assumptions in our arguments below;
requiring  various functions to be non-vanishing before we divide by them.
 As will be explained in \cite{LONG}, it can be proved that our  analysis covers a \textit{dense}
  open subset of $\cM_\red$.
The domain  on which our subsequently derived local formulae are valid is  revisited in
 Section \ref{discussion}.

\subsection{Solving the constraint conditions}

So far we have introduced partial gauge fixing so that $b=b_R$ takes the form
 specified in (\ref{bRform}), and then adopted (\ref{An2}).
Now we deal with the consequences of the left-hand part of the constraints imposed in (\ref{constraint}).
 According to Proposition \ref{momentummap}, this is equivalent to
\be\label{firststepconstraint}
gg^\dag - gg^\dag \bpm 0 &0\\0& y^{-2}\epm gg^\dag  = \bpm y^{-2}\sigma\sigma^\dag &0\\0& 0 \epm.
\ee
Substituting $g=kb$, then conjugating with $k^\dag$ and multiplying by $2y^2$, we have
\be
2y^2 bb^\dag -  bb^\dag k^\dag (id - \Inn) k bb^\dag = 2 k^\dag\bpm \sigma\sigma^\dag&0\\0&0\epm k
\ee
and, after using (\ref{sigmadef}) and rewriting the matrix on the right hand side accordingly, we obtain
\be\label{mainconstraint}
2y^2\Omega - \Omega^2 + \Omega L\Inn\Omega = \alpha^2id + \alpha^2LI + 2ww^\dag.
\ee
Our objective is to find the general solution of (\ref{mainconstraint}) for $L$ in terms of
\be
\Omega = b b^\dagger =\bpm x^2\1_n + \beta^2 & x^{-1}\beta \\ x^{-1}\beta & x^{-2}\1_n\epm.
\label{Ombeta}\ee

Somewhat surprisingly, containing as it does the  several
unknown quantities, $ww^\dag$ and $L$,  equation (\ref{mainconstraint}) can be solved directly.
 To see this, we start by noticing that the simple blockwise diagonal form of $\Omega$
 allows us to diagonalise it very easily.
To present $\Omega$ in diagonalised form,
let us introduce  the matrix
\be\label{rhoform}
\rho := \bpm \G&\S\\ \S &-\G\epm\quad\hbox{with}\quad
\G:=\diag(\Gamma_1,\dots,\Gamma_n),\quad \S:=\diag(\S_1,\dots,\S_n).
\ee
Define $\Gamma_i$ and $\S_i$ by the formulae
\be\label{rhoentries}
\Gamma_i = \left[\frac{\Lm_i-x^{-2}}{\Lm_i-\Lm_i^{-1}}\right]^{\frac12}\ ,
\qquad
\S_i = \left[\frac{x^{-2} - \Lm_i^{-1}}{\Lm_i-\Lm_i^{-1}}\right]^{\frac12}
\ee
in terms of the new variables
\be
\Lm_1 > \Lm_2 > \dots > \Lm_n > \max(x^2,x^{-2}).
\label{Lambdaord}\ee
Then it is readily checked that every matrix $\Omega$ (\ref{Ombeta}) can be written in form
\be
\Omega = \rho\, \diag(\Lambda_1,\dots, \Lambda_n, \Lambda_{n+1},\dots, \Lambda_{2n})\, \rho
\quad\hbox{with}\quad \Lambda_{n+i} = \Lambda_i^{-1},
\label{Omdiag}\ee
using the following invertible correspondence between the variables $\beta_i$ and $\Lambda_i$:
\be\label{betaentries}
\beta_i = \left[\Lm_i+\Lm_i^{-1} - x^2 - x^{-2}\right]^\frac12 .
\ee
Because of the blockwise diagonal structure of $\Omega$, it is enough to check the
claim for the case $n=1$.
The condition (\ref{Lambdaord}) is equivalent to (\ref{An2}).
The relations $\Gamma_i^2 + \Sigma_i^2=1$ entail that $\rho$ is a symmetric real orthogonal matrix,
\be
\rho =\bar \rho =  \rho^\dagger = \rho^{-1}.
\label{rhoprop}\ee

Now we return to (\ref{mainconstraint}), from now on using the variables
$\Lambda_i$ instead of the  variables $\beta_i$.
Setting
\be\label{Qandw}
Q := \rho L\Inn\rho
\quad\hbox{and}\quad
\tw := \rho  w,
\ee
we get
\be
2y^2\Lm - \Lm^2 + \Lm Q \Lm = \alpha^2id + \alpha^2Q + 2\tw\tw^\dag.
\ee
Assuming that we can divide, this gives in components
\be\label{secondconstraintversion}
Q_{ab} = (\Lm_a\Lm_b - \alpha^2)^{-1}\Bigl[(\Lm_a^2 - 2y^2\Lm_a
+ \alpha^2)\delta_{ab} + 2\tw_a {\tw_b}^*\Bigr],
\quad a,b=1,2,\dots,2n.
\ee
 Reformulating   (\ref{LIwrel}), we have
\be
 Q\tw = \tw;
\ee
 that is
\be
\tw_a=(Q\tw)_a = \sum_{b=1}^{2n} Q_{ab}\tw_b =
\frac{(\Lm_a^2+\alpha^2-2y^2\Lm_a)}{(\Lm_a^2-\alpha^2)}\tw_a +
2\tw_a\sum_{b=1}^{2n}\frac{|\tw_b|^2}{\Lm_a\Lm_b-\alpha^2}.
\ee
Supposing that $\tw_a\neq0$, this yields
\be\label{modwrel}
\sum_{b=1}^{2n}\frac{|\tw_b|^2}{\Lm_a\Lm_b-\alpha^2}
=
\frac{y^2\Lm_a - \alpha^2}{\Lm_a^2-\alpha^2},
\ee
from which each of the $|\tw_b|^2$ is expressed in terms of the
components of $\Lm$, by means of the inverse of the Cauchy--like
matrix $C_{ab} = (\Lm_a\Lm_b-\alpha^2)^{-1}$.

Working on the open domain where (\ref{An2}) and all non-vanishing assumptions
 hold, we find explicit expressions for  $|\tw_a|^2$ as functions of $\Lambda$.

\begin{proposition}
Solving (\ref{modwrel}), we obtain
\be\label{modtwsol}
|\tw_a|^2 = \alpha(\Lm_a-y^2)\prod_{\substack{b= 1\\ (b\neq a)}}^{2n}
\frac{\alpha^{-1}\Lm_a\Lm_b-\alpha}{\Lm_a-\Lm_b},
\qquad a=1,\dots,2n.
\ee
\end{proposition}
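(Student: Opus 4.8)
The plan is to verify that the proposed expression solves the linear system (\ref{modwrel}) and to observe that, under our standing assumptions, the solution is unique, so that the checked formula is necessarily the one sought. Uniqueness is quick: writing $\Lm_a\Lm_b-\alpha^2=\Lm_b(\Lm_a-\alpha^2/\Lm_b)$ exhibits the coefficient matrix $C_{ab}=(\Lm_a\Lm_b-\alpha^2)^{-1}$ as a Cauchy matrix with nodes $\Lm_a$ and $\alpha^2/\Lm_b$ multiplied by the diagonal matrix $\diag(\Lm_b^{-1})$. Its determinant is the usual nonzero product of node-differences as soon as the $\Lm_a$ are distinct, which holds by (\ref{Lambdaord}), and all denominators $\Lm_a\Lm_b-\alpha^2$ are nonzero, which is among the non-vanishing assumptions. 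Hence (\ref{modwrel}) has a unique solution, and it suffices to substitute the claimed one into the left-hand side and recover the right-hand side.

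The heart of the verification is a residue computation. First I would absorb the constant by writing $\alpha^{-1}\Lm_a\Lm_c-\alpha=\alpha^{-1}(\Lm_a\Lm_c-\alpha^2)$, so that
\[
|\tw_b|^2=\alpha^{2-2n}(\Lm_b-y^2)\prod_{c\neq b}\frac{\Lm_b\Lm_c-\alpha^2}{\Lm_b-\Lm_c}.
\]
Fixing the index $a$, I introduce the rational function of a complex variable,
\[
\hat f(z)=\frac{(z-y^2)\prod_{c\neq a}(z\Lm_c-\alpha^2)}{(z^2-\alpha^2)\prod_{c=1}^{2n}(z-\Lm_c)}.
\]
The \emph{decisive} structural point is that the summand weight $(\Lm_a\Lm_b-\alpha^2)^{-1}$ would suggest a denominator factor $(\Lm_a z-\alpha^2)$, but this cancels against the $c=a$ factor of $\prod_c(z\Lm_c-\alpha^2)$; consequently $\hat f$ carries only the restricted product $\prod_{c\neq a}$ and has no pole at $z=\alpha^2/\Lm_a$. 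A direct evaluation of the residue at the simple pole $z=\Lm_b$ (using $\prod_{c\neq a}(\Lm_b\Lm_c-\alpha^2)=(\Lm_b^2-\alpha^2)(\Lm_a\Lm_b-\alpha^2)^{-1}\prod_{c\neq b}(\Lm_b\Lm_c-\alpha^2)$, which cancels the $(\Lm_b^2-\alpha^2)$ coming from $z^2-\alpha^2$) then gives exactly
\[
\rez_{z=\Lm_b}\hat f=\alpha^{2n-2}\,\frac{|\tw_b|^2}{\Lm_a\Lm_b-\alpha^2},
\]
so that the left-hand side of (\ref{modwrel}) equals $\alpha^{2-2n}\sum_{b}\rez_{z=\Lm_b}\hat f$.

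Because the numerator of $\hat f$ has degree $2n$ while its denominator has degree $2n+2$, one has $\hat f(z)=O(z^{-2})$ at infinity, so the sum of all its finite residues vanishes. The only poles other than the $\Lm_b$ are the simple ones at $z=\pm\alpha$, whence
\[
\sum_{b=1}^{2n}\rez_{z=\Lm_b}\hat f=-\rez_{z=\alpha}\hat f-\rez_{z=-\alpha}\hat f.
\]
I then evaluate these two residues; at $z=\pm\alpha$ the restricted products $\prod_{c\neq a}(\Lm_c\mp\alpha)$ cancel against the matching factors of $\prod_c(z-\Lm_c)$, leaving
\[
\rez_{z=\alpha}\hat f+\rez_{z=-\alpha}\hat f=\frac{\alpha^{2n-2}}{2}\left(\frac{\alpha-y^2}{\Lm_a-\alpha}-\frac{\alpha+y^2}{\Lm_a+\alpha}\right)=-\alpha^{2n-2}\,\frac{y^2\Lm_a-\alpha^2}{\Lm_a^2-\alpha^2}.
\]
Substituting back, the left-hand side of (\ref{modwrel}) collapses to $(y^2\Lm_a-\alpha^2)/(\Lm_a^2-\alpha^2)$, which is precisely its right-hand side.

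The only real labour is the careful bookkeeping of the three residue evaluations, and the single genuine subtlety—on which the whole argument turns—is recognising the cancellation of the $(\Lm_a z-\alpha^2)$ factor. This is what both reduces $\hat f$ to the clean form above and secures the $O(z^{-2})$ decay that forces the sum of residues to vanish; without it one is left with a spurious pole at $z=\alpha^2/\Lm_a$ and a more delicate control at infinity. I expect no conceptual difficulty beyond this, the identity being an instance of Cauchy-type interpolation recast through residue calculus; I note in passing that the computation uses only the distinctness of the $\Lm_c$ and the evenness of their number, not the reciprocal structure $\Lm_{n+i}=\Lm_i^{-1}$.
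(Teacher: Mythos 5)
Your proof is correct, and it takes a genuinely different route from the paper's. The paper \emph{derives} (\ref{modtwsol}) constructively: it rescales to $x_a=\alpha^{-1}\Lm_a$, invokes the explicit formula for the inverse of the Cauchy matrix $C_{ab}=\alpha^{-2}(x_ax_b-1)^{-1}$, and then evaluates the resulting sum $\sum_b (C^{-1})_{ab}\,\frac{\alpha^{-1}y^2x_b-1}{x_b^2-1}$ by summing residues of an auxiliary $1$-form $\Psi_a(z)\,dz$ over the Riemann sphere, with the nontrivial contributions coming from $z=\pm 1$. You instead \emph{verify} that the stated expression satisfies the linear system (\ref{modwrel}) --- via residues of your $\hat f(z)\,dz$, with the extra poles at $z=\pm\alpha$ --- and supplement this with a uniqueness argument from the nonvanishing of $\det C$. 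I checked your residue bookkeeping (the identification of $\rez_{z=\Lm_b}\hat f$ with $\alpha^{2n-2}|\tw_b|^2/(\Lm_a\Lm_b-\alpha^2)$, valid also at $b=a$; the $O(z^{-2})$ decay; and the evaluation at $z=\pm\alpha$) and it is sound. What your approach buys is independence from the explicit Cauchy-inverse formula, at the price of having to know the answer in advance and of needing the separate invertibility step; the paper's approach produces the formula without guessing. Both arguments rest on the same standing assumptions (distinctness of the $\Lm_a$, $\Lm_a\Lm_b\neq\alpha^2$, and validity of (\ref{modwrel}) for every index $a$). One small quibble: your closing remark that the computation uses ``the evenness of the number'' of nodes is not quite the right emphasis --- only the count $2n$ enters (through the product length and the power of $\alpha$), and the identity would hold verbatim for any number of distinct nodes; the reciprocal pairing $\Lm_{n+i}=\Lm_i^{-1}$ is indeed irrelevant here, as you say.
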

\begin{proof}
Rewriting (\ref{modwrel}), we have
\be
|\tw_a|^2 = \sum_{b=1}^{2n}(C^{-1})_{ab}\frac{\alpha^{-1}y^2x_b - 1}{x_b^2-1},
\ee
with
\be\label{xdefs}
C_{ab}=\frac{\alpha^{-2}}{x_a x_b-1},\qquad x_a=\alpha^{-1}\Lm_a.
\ee
From the standard formula for the inverse of a Cauchy matrix, we may deduce
\be
(C^{-1})_{ab} = \alpha^2\frac{(x_ax_b)^{2n}}{(x_ax_b-1)}
\frac{A(x_a^{-1})A(x_b^{-1})}{A'(x_a)A'(x_b)},\qquad a,b=1,2,\dots,2n,
\ee
using the complex function
\be
A(z) = \prod_{a=1}^{2n}(z-x_a)
\ee
and its derivative  $A'(z)$.
Consequently,
\be\label{explicitly}
|\tw_a|^2 = \frac{\alpha^2x_a^{2n}A(x_a^{-1})}{A'(x_a)}
\sum_{b=1}^{2n}\frac{x_b^{2n}A(x_b^{-1})}{(x_ax_b-1)A'(x_b)}\frac{\alpha^{-1}y^2x_b-1}{x_b^2-1}.
\ee
To simplify the sum, introduce the rational function $\Psi_a(z)$ of a complex variable
\be\label{Phia}
\Psi_a(z) := \frac{z^{2n}A(z^{-1})(\alpha^{-1}y^2z- 1)}{(x_az-1)(z^2-1)A(z)}.
\ee
Observing that $\Psi_a(z)dz$ extends to a meromorphic 1-form on the Riemann sphere $\overline{\CC}$, the sum of its
residues over $\overline{\CC}$ must be zero. All the poles of $\Psi_a(z)dz$ are simple, and
they are located at $z=x_b$ for $b=1,2,\dots,2n$ and at $z=\pm1$.
The sum of the
residues at $z=x_b$ is exactly the sum in (\ref{explicitly}), and so this sum
can be evaluated by computing the residues at $z=\pm1$.
 We find
\be
 \rez_{z=+1}\Bigl( \Psi_a(z)dz \Bigr)+  \rez_{z=-1}\Bigl( \Psi_a(z)dz\Bigr)
=
-\frac{x_a-\alpha^{-1}y^2}{x_a^2-1}.
\ee
Substitution into (\ref{explicitly}) produces
\be
|\tw_a|^2 = \alpha\left(\frac{\alpha x_a-y^2}{x_a^2-1}\right) \frac{x_a^{2n}A(x_a^{-1})}{A'(x_a)},
\ee
and replacing $x_a=\alpha^{-1}\Lm_a$ gives the stated result.
\end{proof}

We have expressed $Q$ (\ref{Qandw}), and therefore also $L=\rho Q \rho I$, in
terms of $\Lambda$ and $\tw=\rho w$.
Hence it follows from (\ref{betaentries})  and the transformation rule (\ref{Tact})
 that we may parametrize the gauge orbits using $\Lambda$
 together with invariants of $w$.
Equivalently,
 we may build invariants out of $\tilde w$, which, due to the form of $\rho$
(\ref{rhoform}), transforms under the residual
gauge action (\ref{Tact}) in the same way as $w$, i.e.,
\be
T:  \tilde w \mapsto T \tilde w.
\ee
Recalling the form of $T\in {\TT}_n$ (\ref{Tn}), we see  that the angles
$\theta_j$ defined by the relations
\be
\tilde w_j^* \tilde w_{n+j} = \vert \tilde w_j \tilde w_{n+j}\vert e^{\ri \theta_j},  \qquad j=1,\dots, n,
\label{thetadef}\ee
are invariants.
Since the conditions $\tilde w_j \in {\RR}_{>0}$ for all
 $j=1,\dots, n$ define a complete gauge fixing for the residual gauge transformations (3.4),
the variables $\Lambda_j$ together with the $\theta_j$ provide a \textit{complete set of invariants}
that label the gauge orbits in our open subset of $\cM_0$.

\subsection{Darboux coordinates on the reduced space}

The reduced phase space $\cM_\red$ is a symplectic manifold,  and we denote the
 Poisson bracket of smooth functions
 on $\cM_\red$ by $\{\ ,\ \}_\red$.
It is apparent already in (\ref{FflowsLOmw}) that the eigenvalues of $\Omega$ and
 the phase-like invariants of $\tw$, as exhibited in (\ref{thetadef}), are candidates
 for Darboux coordinates.  We are going to  prove that they indeed are such.
As a preparation, we next formulate a consequence of the general theory
of Hamiltonian reduction.

Let $\cM_1$ denote the subspace of the constraint surface $\cM_0$ (\ref{constraint})
consisting of the elements for which $b$ has the form (\ref{bRform}).
 Then there is a natural one-to-one correspondence between the
 gauge invariant smooth functions on $\cM_1$, with respect to the
 residual gauge transformations acting on $\cM_1$,
and the smooth functions on $\cM_\red$ (\ref{Mred}).
Take a $(K_+\times K_+)$-invariant function $\cH$ on $\cM$ and a gauge invariant function
$\cG$ on $\cM_1$, and consider the Poisson bracket $\{\cG^\red, \cH^\red\}_\red$ of
the corresponding functions $\cG^\red$ and $\cH^\red$ on $\cM_\red$.
The gauge invariant function on $\cM_1$ that corresponds to $\{ \cG^\red, \cH^\red\}_\red$
is the derivative of $\cG$ along any vector field of the form
\be
\XX_\cH^1 = \XX_\cH + \YY_\cH,
\label{R14}\ee
where $\XX_\cH$ is the Hamiltonian vector field of $\cH$ restricted to $\cM_1$, and $\YY_\cH$
represents the right-handed action
of point dependent elements of the Lie algebra $\k_+$ of $K_+$,  chosen in such a way
that $\XX_\cH^1$ is tangent to $\cM_1$.
This is expressed by the equality
\be
\{ \cG^\red, \cH^\red\}_\red = (\XX_\cH^1(\cG))^\red.
\label{R15}\ee
The vector field $\XX_\cH^1$ is determined in the following way.
If $\dot{k}$ and $\dot{b}$ denote the components of $\XX_\cH(g)$ corresponding to
the decomposition $\cM\owns g=kb$,  and $k'$ and $b'$ denote the components of $\XX^1_\cH(g)$
corresponding to the decomposition $\cM_1\owns g=kb$, then we have
\be\label{R16}
k' = \dot{k} - kY, \quad b' = \dot{b} + [Y, b],
\ee
where $Y \in \k_+$  is the ``compensating infinitesimal gauge transformation'', ensuring that the
$\XX_\cH^1$-derivative $b'$ of $b$ is
consistent with the form of $b$ (\ref{bRform}).
This fixes  $Y$ up to  infinitesimal, right-handed gauge transformations tangent to $\cM_1$.
Concretely,  writing $Y=\diag(Y_1,Y_2)$, the $B$-component of (\ref{R16})
can be recast as
\be
\beta'= \dot{b}_{12}+ Y_1\beta - \beta Y_2,
\label{betaprime}\ee
where $\dot{b}_{12}$  denotes the top-right $n\times n$ block of $\dot{b}$.
The condition on $Y$ is that $\beta'$ must be a real diagonal matrix, because $\beta$
is a real diagonal matrix.
We observe from (\ref{betaprime}) that, up to its inherent ambiguity, $Y$
can be viewed as a function of $\beta$ and $\dot{b}_{12}$, which
themselves are functions on $\cM_1$.

We shall apply the above  procedure to the open submanifold $\check \cM_\red$ of $\cM_\red$
that can  be parametrized
by the invariants $\Lambda_j$ (\ref{Omdiag})  and $e^{\ri \theta_j}$ (\ref{thetadef}),
and denote the corresponding submanifold of $\cM_1$ by $\check \cM_1$
We note that every gauge invariant function on $\check \cM_1$ can be regarded as a function of $\beta$ and $w$,
since they determine $L$ by equations  (\ref{betaentries})-(\ref{secondconstraintversion}).
For a gauge invariant function $\cG$ on $\check \cM_1$, denoting by $\cG^\red$ the expression in the local coordinates
$(\Lm,e^{i\theta})$ of the corresponding function on $\check \cM_\red$, we have
\be
\cG^\red(\Lm,e^{i\theta}) = \cG(\beta,w),
\ee
where $(\beta,w)\mapsto(\Lm,e^{i\theta})$ is given by (\ref{betaentries}), (\ref{Qandw}) and (\ref{thetadef}).
We shall also use the fact that on
$\check \cM_1$ the functions $\vert\tilde w_a\vert$ ($a=1,\dots, 2n$) are non-zero and depend only on $\Lambda$.

\begin{theorem}
On the open submanifold of $\check \cM_\red \subset \cM_\red$ parametrized by $\lambda_j:= \frac{1}{2} \log \Lambda_j$
(\ref{Omdiag})
and the angles $\theta_j$ (\ref{thetadef}) we have the canonical Poisson brackets
\be
\{\lambda_j, \lambda_l\}_\red =0, \quad \{\theta_j, \lambda_l\}_\red = \delta_{jl},\quad
\{ \theta_j, \theta_l\}_\red = 0,\quad j,l=1,\dots, n.
\label{Darboux}\ee
\end{theorem}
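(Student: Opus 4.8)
The plan is to deduce all three brackets from the reduction formula (\ref{R15}), which identifies $\{\cG^\red,\cH^\red\}_\red$ with the derivative of the gauge invariant function $\cG$ along the slice-tangent lift $\XX^1_\cH$ (\ref{R14})--(\ref{R16}) of the Hamiltonian vector field of a $(K_+\times K_+)$-invariant function $\cH$, used together with the explicit flow (\ref{FflowsLOmw}) of the commuting family $\{F_l\}$.

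For $\{\lambda_j,\lambda_l\}_\red=0$: each $\lambda_j=\half\log\Lambda_j$ is, by (\ref{Omdiag})--(\ref{betaentries}), a function of the eigenvalues of $\Omega$, hence a function of the invariants $F_l$ alone; as these are $(K_+\times K_+)$-invariant and mutually in involution, the $\lambda_j$ descend to a Poisson-commuting family. (Equivalently, $\dot\Omega=0$ along $\XX_{F_l}$, so $\XX^1_{F_l}$ annihilates every function of $\Omega$.)

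For $\{\theta_j,\lambda_l\}_\red=\delta_{jl}$ I take $\cH=F_l$, $\cG=\theta_j$. Since $\dot b=0$ along $\XX_{F_l}$, the slice form (\ref{bRform}) is preserved with vanishing compensating term in (\ref{R16}), so $\XX^1_{F_l}=\XX_{F_l}$ on $\check\cM_1$. Using $\Omega=\rho\,\diag(\Lambda_1,\dots,\Lambda_{2n})\,\rho$, $\rho^2=\mathrm{id}$ and $\tw=\rho w$, the last line of (\ref{FflowsLOmw}) becomes the diagonal phase rotation $\dot{\tw}_a=-\ri(\Lambda_a^l-\nu_l)\tw_a$ (with $\Lambda_{n+i}=\Lambda_i^{-1}$), so every $|\tw_a|$ is conserved, consistent with (\ref{modtwsol}). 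Differentiating (\ref{thetadef}) then gives $\dot\theta_j=\Lambda_j^l-\Lambda_j^{-l}=\partial F_l/\partial\lambda_j$, i.e. $\{\theta_j,F_l\}_\red=\partial F_l/\partial\lambda_j$ by (\ref{R15}). Expanding $\{\theta_j,F_l\}_\red=\sum_m\{\theta_j,\lambda_m\}_\red\,\partial F_l/\partial\lambda_m$ for $l=1,\dots,n$ and inverting the matrix $B_{ml}=\partial F_l/\partial\lambda_m=\Lambda_m^l-\Lambda_m^{-l}$ yields $\{\theta_j,\lambda_m\}_\red=\delta_{jm}$. I expect the only real work here to be the nonsingularity of $B$, which follows by a short root count: a relation $\sum_{l=1}^n c_l(\Lambda^l-\Lambda^{-l})=0$ holding at the $n$ distinct $\Lambda_j>1$ would, after multiplication by $\Lambda^n$, produce a polynomial of degree $2n$ with the $2n+1$ distinct roots $\Lambda_j,\Lambda_j^{-1},1$.

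The remaining bracket $\{\theta_j,\theta_l\}_\red=0$ is the main obstacle, since the $F_l$-flows fix all $\lambda$ and therefore cannot see it. The brackets already proved only give that $\partial_{\theta_m}$ is the Hamiltonian vector field of $\lambda_m$, whence $\{\theta_j,\theta_l\}_\red$ is independent of the angles, say $g_{jl}(\lambda)$; closedness of $\omega_\red$ constrains but does not annihilate $g_{jl}$. My plan is to kill it with the antiholomorphic involution $\iota\colon g\mapsto\bar g$ of $\cM=\SL(2n,\CC)$. Because $\langle\bar X,\bar Y\rangle=-\langle X,Y\rangle$ and $\cR=\half(P_\k-P_\b)$ commutes with conjugation (both $\k$ and $\b$ are conjugation-stable), a direct check on (\ref{topPB}) shows $\{\varphi\circ\iota,\psi\circ\iota\}=-\{\varphi,\psi\}\circ\iota$, so $\iota$ is anti-Poisson. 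Taking the fixed vector $\hat v$ real (as we may) makes $\sigma$ real and the constraints (\ref{constraint}) conjugation-invariant, and $\iota$ commutes with the $(K_+(\sigma)\times K_+)$-action, so it descends to an anti-Poisson involution of $\cM_\red$. On the slice (\ref{bRform}) the matrix $b$ is real, so $\iota$ fixes $\Omega$ and each $\lambda_j$, while $w\mapsto\bar w$, hence $\tw\mapsto\bar{\tw}$, sends $\theta_j\mapsto-\theta_j$ in (\ref{thetadef}). Therefore $g_{jl}=g_{jl}\circ\iota=-\{\theta_j\circ\iota,\theta_l\circ\iota\}_\red=-\{\theta_j,\theta_l\}_\red=-g_{jl}$, forcing $g_{jl}=0$. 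The delicate points I would check carefully are the anti-Poisson identity for $\iota$ and its clean descent to the quotient.
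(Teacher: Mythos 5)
Your argument is correct. For the first two brackets it coincides with the paper's proof: the $\lambda_j$ commute because they are functions of the involutive family $F_l^\red$, and $\{\theta_j,\lambda_l\}_\red=\delta_{jl}$ follows by comparing $\XX^1_{F_l}(\theta_j)=2\sinh(2l\lambda_j)$ with the chain-rule expansion of $\{\theta_j,F_l^\red\}_\red$ (you make explicit the invertibility of the matrix $\bigl(\sinh(2l\lambda_m)\bigr)_{m,l=1}^n$, which the paper leaves tacit). For the third bracket your route is genuinely different. The paper likewise first uses Jacobi to conclude that $P_{kl}=\{\theta_k,\theta_l\}_\red$ depends only on $\lambda$, but then kills it by a substantial computation: it lifts the $\Phi_1$-flow to the gauge slice with a compensating element $Y\in\k_+$, introduces the auxiliary gauge-invariant functions $\cG_A=\frac{1}{2\ri}w^\dag \hat A w$, and exploits the pointwise conjugation $g\mapsto g^\sharp$ with $w(g^\sharp)=w(g)^*$, together with $Y(\beta,2\ri x^{-1}L_{12}^*)=Y(\beta,2\ri x^{-1}L_{12})^T$, to show that the part of $\{\cG_A^\red,\Phi_1^\red\}_\red$ odd under $\theta\to-\theta$ vanishes, which forces $P_{kl}=0$. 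You promote the same underlying order-two symmetry to a \emph{global} anti-Poisson involution $\iota:g\mapsto\bar g$ of the Heisenberg double --- anti-Poisson because $\langle\bar X,\bar Y\rangle=-\langle X,Y\rangle$ while $\cR$, the subalgebras $\k$ and $\b$, and the factorization $K\times B$ are all stable under entrywise conjugation --- which descends to $\cM_\red$ once $\hat v$ (hence $\sigma$) is taken real, a normalization the paper's own footnote sanctions; the parity computation $g_{jl}=g_{jl}\circ\iota_\red=-g_{jl}$ then finishes the proof in one line. What the paper's version buys is independence from the reality assumption on $\hat v$ and avoidance of the (routine but necessary) check that an anti-symplectic map commuting with the gauge action descends anti-symplectically to the quotient; what yours buys is the elimination of the entire $\cG_A$ and $Y$ computation and a reusable structural fact about the reduction. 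Both hinge on the same point, namely that the constraints fix only the moduli $|\tw_a|$ and not their phases.
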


\begin{proof}
The first two relations in (\ref{Darboux}) are shown easily.
For this, we start by pointing out that
the reductions
of the Poisson commuting functions $F_l\in C^\infty(\cM)^{K_+\times K_+}$,
defined in (\ref{FPhidefs}), read
\be
F_l^\red = \frac{1}{l} \sum_{j=1}^n \cosh(2 l\lambda_j).
\label{Flred}\ee
The identity $\{ F_j^\red, F_l^\red\}_\red =0$  for all $j,l$  is
assured by the reduction, and
is clearly equivalent to $\{ \lambda_j, \lambda_l\}_\red =0$.

Direct calculation on the reduced phase space  gives
\be
\{e^{\ri \theta_j}, F_l^\red\}_\red = 2 \ri e^{\ri \theta_j}
\sum_{m=1}^n \{ \theta_j, \lambda_m\}_\red \sinh(2l \lambda_m).
\label{R7}\ee
Notice from (\ref{Fflows}) that the Hamiltonian vector field of $F_l$ is tangent to $\cM_1$.
Calculating  the right-hand-side of (\ref{R15})
for $\cH= F_l$ and
$\cG = e^{\ri \theta_j}$ defined by (\ref{thetadef}), we find from (\ref{FflowsLOmw}) that
\be
\XX^1_{F_l}(e^{\ri \theta_j}) = 2 \ri\, e^{\ri \theta_j} \sinh(2 l \lambda_j).
\ee
Equality between the last two expressions is equivalent to  $\{ \theta_j, \lambda_l\}_\red = \delta_{jl}$.

The Jacobi identity for $\{\ ,\ \}_\red$ and
the formulae
$\{ \theta_i, \lambda_j\}_\red = \delta_{ij}$ imply
that the functions
\be
 P_{kl}:=\{ \theta_k, \theta_l\}_\red
\label{R13}\ee
depend only on $\lambda$.
It remains  to prove that these functions vanish identically.

We consider the
function $\Phi_1\in C^{\infty}(\cM)^{K_+ \times K_+}$, also defined in (\ref{FPhidefs}).
The Hamiltonian vector field of $\Phi_1$, given by the $l=1$ special case of (\ref{Phiflows}),
is tangent to $\cM_0$, but is not tangent to $\check \cM_1$. In this case $\dot{b}_{12}= 2 \ri x^{-1} L_{12}$,
and we can find $Y=Y(\beta,2\ri x^{-1}L_{12})\in\k_+$ such that
\be
\beta' \equiv \XX_{\Phi_1}^1(\beta) = 2 \ri x^{-1} L_{12}  + Y_1 \beta - \beta Y_2
\label{R24}\ee
will be a real diagonal matrix.
To proceed further, we point out that for every element $g = k b \in \check \cM_1$, there exists
another element $g^\sharp = k^\sharp b \in \check \cM_1$ for which
\be\label{wgsharp}
w(g^\sharp) = w(g)^*
\quad\hbox{and consequently}\quad
L(g^\sharp) = L(g)^*,
\ee
where star denotes complex conjugation. This holds since the constraint  condition
(\ref{mainconstraint})  is stable under
complex conjugation\footnote{We can take
$g^\sharp = g^*$  whenever the fixed vector $\hat w$ (\ref{hatw}) is real.}.
More concretely, it reflects the fact
that for fixed $\beta$  the constraints determine
only the moduli $|\tw_a|$ of the $\tw_a$ (\ref{Qandw}), and all values are possible for $\arg(\tw_a)$.
For a given $g$, any two choices of $g^\sharp$ are related by a gauge transformation, since  $w$  determines
$k$ up to the left-handed action of
$K_+(\hat w)$.
The rest of the proof relies on the property
\be\label{YLcondition}
Y(\beta, 2\ri x^{-1}L_{12}^*) = Y(\beta, 2\ri x^{-1}L_{12})^T,
\ee
which follows by comparison of equation (\ref{R24}) with its complex conjugate.
Of course, this equality is understood up to the ambiguity in  $Y$,  that does not  affect the derivatives
of gauge invariant functions.

Let $A=\diag(A_1,A_2,\dots,A_n)$ be a diagonal matrix with $A_j\in\RR$ for all $j$, and introduce the $2n\times 2n$ matrix
\be
\hat A = \bpm 0&-A\\ A&0\epm.
\ee
We then define the gauge invariant function $\cG_A$  on $\check \cM_1$ by
\be
 \cG_A(g) =  \frac1{2\ri}w^\dag\hat Aw.
\ee
Using the $l=1$ case of $\dot{w}$ from  (\ref{PhiflowsLOmw}),  with (\ref{R16}) and (\ref{LIwrel}),
the derivative $w'$ of $w$ along $\XX_{\Phi_1}^1$  reads
\be\label{wcompphi1}
w' = \half\ri (id+I)L(id-I)w + Yw,
\ee
and we easily check that
\be
\XX_{\Phi_1}^1(\cG_A)(g)=\XX_{\Phi_1}^1(\cG_A)(g^\sharp).
\label{ourfavouritetrick}\ee
Indeed, denoting $Y(\beta,2\ri x^{-1}L_{12})$ simply by $Y$ for short, we have
\be
\ba
\XX_{\Phi_1}^1&( \cG_A)(g)
 =
w'^\dag\hat A w + w^\dag\hat A w'  \\
{}\quad &
 =
\frac1{2\ri}w^\dag\Bigl( \left[\half\ri (id-I)L^\dag(id+I) + Y\right]\hat A^T  +
\hat A\left[\half \ri (id+I)L(id-I) + Y\right] \Bigr)w\\
\ea
\ee
and, using(\ref{wgsharp}) and (\ref{YLcondition}),
\be
\ba
{\XX}_{\Phi_1}^1( \cG_A)(g^\sharp)
=
\frac1{2\ri}w^T\Bigl(\hat A\left[\half \ri (id+I)L^*(id-I) + Y^T\right] + \left[\half\ri (id-I)L^T(id+I) + Y^T\right]\hat A^T \Bigr)w^*.
\ea
\ee
 It is easy to see that these are the same.

 Next, let us inspect the reduced version of the equality (\ref{ourfavouritetrick}).
Taking into account the relation $\tw=\rho w$ and using $\rho\hat A\rho=-\hat A$, we obtain
\be
\cG_A^\red(\lambda,\theta) = \sum_{i=1}^nA_i \bigl(|\tw_i|\,|\tw_{n+i}|\bigr)(\lambda)\sin\theta_i.
\ee
On the other hand, $\Phi_1^\red$ takes the form
\be
\Phi_1^\red(\lambda, \theta)= V(\lambda) + \sum_{j=1}^n f_j(\lambda) \cos \theta_j
\label{R19}\ee
with some  functions $V$ and $f_j$.
(Equation (\ref{distilham}) below shows that $f_j(\lambda) \neq 0$ on  $\check\cM_\red$.)
 Direct calculation then yields
\be
\ba
\{\cG_A^\red,\Phi_1^\red\}_\red  &=
\sum_{i=1}^n\left[\frac{\partial \cG_A^\red}{\partial\theta_i}\frac{\partial\Phi_1^\red}{\partial\lambda_i}-\frac{\partial \cG_A^\red}{\partial\lambda_i}
\frac{\partial\Phi_1^\red}{\partial\theta_i}\right]
+\sum_{i,j=1}^nP_{ij}\frac{\partial\cG_A^\red}{\partial\theta_i}\frac{\partial\Phi_1^\red}{\partial\theta_j}\\
&= \sum_{i=1}^n\sum_{j=1}^n A_jf_i\frac{\partial\left(|\tw_j|\,|\tw_{j+1}|\right)}{\partial\lambda_i}\sin\theta_i\sin\theta_j\\
& +
\sum_{i=1}^n A_i|\tw_i|\,|\tw_{n+i}|\cos\theta_i
\left[\sum_{j=1}^n\frac{\partial f_j}{\partial\lambda_i}\cos\theta_j + \frac{\partial V}{\partial\lambda_i} -
\sum_{j=1}^nf_j P_{ij}\sin\theta_j\right],\\
\ea
\ee
with the notation (\ref{R13}).
This implies the relation
\be
\{\cG_A^\red,\Phi_1^\red\}_\red(\lambda,-\theta) - \{\cG_A^\red,\Phi_1^\red\}_\red(\lambda,\theta) =
2 \sum_{i=1}^n \sum_{j=1}^n A_i \cos\theta_i \Bigl(|\tw_i|\,|\tw_{n+i}| P_{ij} f_j \Bigr)(\lambda) \sin\theta_j.
\label{critexpr}\ee
Now we  notice from (\ref{thetadef})  that, for invariant functions on $\check \cM_1$,
$(\lambda,\theta)\mapsto(\lambda,-\theta)$ is equivalent to $\tw\mapsto\tw^*$ and, as $w=\rho\tw$,
the same is true for $w$, i.e. $w\mapsto w^*$.
Therefore, taking into account also (\ref{R15}) and (\ref{wgsharp}), the reduced version of the equality (\ref{ourfavouritetrick})
says that the expression in (\ref{critexpr})  is zero.
Choosing
\be
A_i = \delta_{ik},\quad \theta_j = -\frac\pi2\delta_{jl}
\ee
we  obtain
\be
2 |\tw_k|\,|\tw_{n+k}| f_l P_{kl}=0.
\ee
This necessitates the vanishing of $P_{kl}$, whence the proof is complete.
\end{proof}

\subsection{The  form of the Hamiltonian $\Phi_1^\red$}

The Hamiltonian of interest is  the reduction of $\Phi_1$---the simplest element in the ring
of invariant functions of $L$---expressed as a function of the Darboux coordinates
$\lambda_j$, $\theta_j$ (\ref{Darboux}) on
the reduced phase space. The desired expression can be derived by evaluation of the formula
\be
\Phi_1^\red (\lambda,\theta)  \simeq {\half}\tr L |_{\check\cM_1}
\ee
using, on account of (\ref{Qandw}), $L= \rho Q \rho I$  with $Q$ given by (\ref{secondconstraintversion}).
Since $\tr L$ is gauge invariant, we obtain $\Phi_1^\red$ as a function of $\lambda, \theta$ if we
substitute (\ref{modtwsol}) and (\ref{thetadef}).
In agreement with \cite{FG}, let us replace
\be\label{xyalphsubs}
\alpha=e^{-\mu},\quad x=e^{-v},\quad y=e^{-u},
\ee
where $u,v,\mu$ are real parameters, $\mu>0$.
We shall prove the following
\begin{theorem}\label{hamthm}
The reduced Hamiltonian $\Phi_1^\red$ takes the form
\be\label{distilham}
\ba
 \Phi_1^\red (\lambda,\theta)&=
V(\lambda) +  e^{v-u}\sum_{k=1}^n\frac{\cos\theta_k}{\cosh^2\lambda_k}
\left[1 - \frac{\sinh^2v}{\sinh^2\lambda_k}\right]^{1/2} \left[1 - \frac{\sinh^2u}{\sinh^2\lambda_k} \right]^{1/2}\\
&\qquad\times
\prod_{\substack{l=1\\(l\neq k)}}^n \left[1 - \frac{\sinh^2\mu}{\sinh^2(\lambda_k - \lambda_l)}\right]^{1/2}
\left[1 - \frac{\sinh^2\mu}{\sinh^2(\lambda_k + \lambda_l)}\right]^{1/2}
\ea
\ee
with
\be
V(\lambda) =e^{v-u}\left(\frac{\sinh(v)\sinh(u)}{ \sinh^2\mu}
\prod_{k=1}^n\left[1 - \frac{\sinh^2\mu}{\sinh^2\lambda_k} \right]
-\frac{\cosh( v)\cosh(u)}{\sinh^2\mu}
\prod_{k=1}^n\left[1 + \frac{\sinh^2\mu}{\cosh^2\lambda_k} \right]
+C\right)
\label{pot}\ee
where $\displaystyle{C= ne^{u-v} + \frac{\cosh( v -u)}{\sinh^2\mu}}$.
\end{theorem}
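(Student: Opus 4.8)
The plan is to evaluate $\Phi_1^\red = \half\tr L$ directly on the slice $\check\cM_1$. By (\ref{Qandw}) we have $L\Inn=\rho Q\rho$, hence $L=\rho Q\rho\Inn$, and since $\rho^2=id$ by (\ref{rhoprop}) the cyclicity of the trace gives $\tr L = \tr(Q\,\rho\Inn\rho)$. First I would compute the fixed matrix $D:=\rho\Inn\rho$ from (\ref{rhoform}): because $\G,\S$ are diagonal and $\G_i^2+\S_i^2=1$, one finds
\be
D = \bpm \G^2-\S^2 & 2\G\S\\ 2\G\S & \S^2-\G^2\epm,
\ee
whose only nonzero entries sit on the diagonal and on the anti-diagonal blocks at positions $(i,n+i)$ and $(n+i,i)$. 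Using (\ref{rhoentries}) the diagonal entries combine into the single formula $D_{aa}=(\Lm_a+\Lm_a^{-1}-2x^{-2})/(\Lm_a-\Lm_a^{-1})$, valid for all $a=1,\dots,2n$ thanks to $\Lm_{n+i}=\Lm_i^{-1}$ (\ref{Omdiag}), while the anti-diagonal entries equal $2\G_i\S_i$.

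Next I would split $\tr(QD)=\sum_{a,b}Q_{ab}D_{ba}$ into the contribution of the anti-diagonal blocks of $D$, which carries all the $\theta$-dependence, and the contribution of the diagonal of $D$, which is $\theta$-independent and produces the potential. For the $\theta$-part the crucial simplification is $\Lm_i\Lm_{n+i}=1$, so by (\ref{secondconstraintversion}) the entries $Q_{i,n+i}$ and $Q_{n+i,i}$ lose their diagonal piece and reduce to $2\tw_i\tw_{n+i}^*/(1-\alpha^2)$ and its conjugate; combined with (\ref{thetadef}) this gives $\tw_i\tw_{n+i}^*+\tw_{n+i}\tw_i^*=2|\tw_i||\tw_{n+i}|\cos\theta_i$. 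The $\theta$-dependent part of $\Phi_1^\red$ is therefore $\sum_k \frac{4\G_k\S_k}{1-\alpha^2}|\tw_k||\tw_{n+k}|\cos\theta_k$, and I would identify the coefficient $f_k$ of $\cos\theta_k$ in (\ref{R19}) by substituting the explicit $|\tw_a|^2$ from (\ref{modtwsol}) and $\G_k\S_k$ from (\ref{rhoentries}), then passing to hyperbolic variables via $\Lm_j=e^{2\lambda_j}$ and (\ref{xyalphsubs}). The reciprocal pairing $\{\Lm_l,\Lm_l^{-1}\}$ inside the product over $b\neq a$ turns the differences $\Lm_k-\Lm_l$ and $\Lm_k-\Lm_l^{-1}$ into $\sinh(\lambda_k\mp\lambda_l)$ and the factor $\alpha$ into $\sinh\mu$, reproducing exactly the product in (\ref{distilham}).

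For the potential I would first streamline the diagonal entries. Extracting the $b=a$ term from (\ref{modwrel}) shows that $Q_{aa}=1-2\sum_{b\neq a}C_{ab}|\tw_b|^2$ with the Cauchy-like matrix $C_{ab}=(\Lm_a\Lm_b-\alpha^2)^{-1}$; moreover the reciprocal pairs make $\sum_a D_{aa}=0$, so the entire $\theta$-independent part collapses to
\be
V(\lambda) = -\sum_{\substack{a,b=1\\(a\neq b)}}^{2n}\frac{D_{aa}\,|\tw_b|^2}{\Lm_a\Lm_b-\alpha^2}.
\ee
This last sum is what I expect to be the main obstacle: after inserting the explicit products (\ref{modtwsol}) for $|\tw_b|^2$ one is left with a sum of rational functions of the $\Lm_a$ that must collapse into the \emph{two} products over $k$ plus the additive constant $C$ displayed in (\ref{pot}). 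The natural tool is the residue calculus already employed in the derivation of (\ref{modtwsol}): for fixed $b$ one realizes the $a$-sum as a sum of residues of a meromorphic $1$-form on $\overline{\CC}$ built from $A(z)=\prod_c(z-\Lm_c)$ and the rational factor $\frac{z^2-2x^{-2}z+1}{(z^2-1)(z\Lm_b-\alpha^2)}$, and evaluates it through the vanishing of the total residue, picking up only the poles at $z=\pm1$, $z=\alpha^2/\Lm_b$ and $z=\infty$; a further residue evaluation of the remaining $b$-sum, in which $|\tw_b|^2$ supplies the requisite $1/A'(\Lm_b)$ weight, should then yield a symmetric closed form. Converting the outcome to hyperbolic variables with $\Lm_j=e^{2\lambda_j}$ and (\ref{xyalphsubs}) is expected to produce precisely the two products and to isolate the constant $C=ne^{u-v}+\cosh(v-u)/\sinh^2\mu$, completing the proof.
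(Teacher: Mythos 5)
Your reduction of $\Phi_1^\red=\half\tr L$ to $\half\tr(Q\,\rho\Inn\rho)$, your computation of $\rho\Inn\rho$, and your extraction of the $\theta$-dependent part coincide with the paper's proof: the paper arrives at exactly your expression $\sum_k\frac{4\G_k\S_k}{1-\alpha^2}\vert\tw_k\vert\,\vert\tw_{n+k}\vert\cos\theta_k$ (the second sum in (\ref{rawham})) and then substitutes (\ref{modtwsol}) in the hyperbolic form (\ref{twk1})--(\ref{twk2}), so that half of your argument is literally the paper's. Where you genuinely diverge is the potential. The paper keeps the $\theta$-independent part as the single sum $\frac12\sum_{a}(\rho\Inn\rho)_{aa}\bigl(D_a+2C_{aa}\vert\tw_a\vert^2\bigr)$ with $D_a, C_{aa}$ from (\ref{DWC}), inserts the closed product formula (\ref{modtwsol}) for $\vert\tw_a\vert^2$, and recognizes the whole sum as $\sum_a\rez_{z=\Lm_a}(\Psi(z)dz)$ for a single explicit meromorphic $1$-form on $\overline{\CC}$; the residue theorem then converts it into the residues at $z=\pm1,\pm\alpha,\infty$, which directly produce the two products and the constant $C$ in (\ref{pot}). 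Your alternative --- using (\ref{modwrel}) to get $Q_{aa}=1-2\sum_{b\neq a}C_{ab}\vert\tw_b\vert^2$ and $\sum_a(\rho\Inn\rho)_{aa}=0$ to collapse $V$ to the double sum $-\sum_{a\neq b}(\rho\Inn\rho)_{aa}C_{ab}\vert\tw_b\vert^2$ --- is correct as far as it goes (both identities check out), but the proposed iterated residue evaluation is harder than you suggest: for fixed $b$ the natural $1$-form realizing the $a$-sum carries the factor $A'(z)/A(z)=\sum_c(z-\Lm_c)^{-1}$, so its residue at $z=\alpha^2/\Lm_b$ is not a function of $\Lm_b$ alone but reintroduces a sum over all $\Lm_c$, and the resulting $(b,c)$ double sum does not close via (\ref{modwrel}) because of the extra rational weights. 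Your route is therefore workable in principle but considerably less economical, and the decisive computation is only sketched; the paper's single-$1$-form trick, which exploits the explicit product form of $\vert\tw_a\vert^2$ rather than eliminating it through the constraint, is the cleaner way to land on (\ref{pot}).
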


\medskip
\begin{proof}
Let us write
\be
Q= D + 2\cW C\cW^\dag
\ee
where, from (\ref{secondconstraintversion}),
\be\label{DWC}
\ba
D_{ab}&=\delta_{ab}D_a \quad\hbox{with}\quad
D_a = (\Lm_a^2-\alpha^2)^{-1}(\Lm_a^2+\alpha^2-2y^2\Lm_a),\\
\cW_{ab} &= \tw_a\delta_{ab}, \quad\hbox{and}\quad C_{ab} = (\Lm_a\Lm_b-\alpha^2)^{-1}.
\ea
\ee
Hence, using (\ref{Qandw}) together with (\ref{rhoform}), we have
\be\label{rawham0}
\ba
\Phi_1^\red &= \frac12\tr Q\rho I\rho
=
\frac12\tr (D + 2\cW C\cW^\dag)\bpm \G^2-\S^2 & 2\G\S\\ 2\G\S & -\G^2+\S^2\epm\\
&=
\frac12\sum_{k=1}^n(\G_k^2-\S_k^2)\Bigl[D_k - D_{n+k} + 2C_{kk}|\tw_k|^2 - 2C_{n+k,n+k}|\tw_{n+k}|^2\Bigr]\\
&\qquad+
2\sum_{k=1}^n\G_k\S_kC_{k,n+k}(\tw_k\tw_{n+k}^*+\tw_k^*\tw_{n+k}).
\ea
\ee
Substituting from (\ref{DWC}), (\ref{rhoentries}) and then reorganising terms, we get
\be\label{rawham}
\ba
\Phi_1^\red
&=
\frac12\sum_{a=1}^{2n}
\frac{\Lm_a+\Lm_a^{-1}-2x^{-2}}{\Lm_a-\Lm_a^{-1}}
\left( \frac{\Lm_a^2+\alpha^2-2y^2\Lm_a}{\Lm_a^2-\alpha^2}
+ \frac{2|\tw_a|^2}{\Lm_a^2-\alpha^2}\right)\\
&\qquad+
4\sum_{k=1}^n\left[\frac{(\Lm_k-x^{-2})(x^{-2}-\Lm_k^{-1})}{(\Lm_k-\Lm_k^{-1})^2}\right]^{\frac12}
\frac{|\tw_k|\,|\tw_{n+k}|\cos\theta_k}{1-\alpha^2}.
\ea
\ee

Let us denote by $V$ the first sum in formula (\ref{rawham}), and insert  $|\tw_a|^2$ from (\ref{modtwsol}).
Introducing the complex function $\Psi(z)$ by
\be
\Psi(z) = F(z) + G(z)
\ee
with
\be
\ba
F(z) &=
\alpha^2\frac{(z^2-2x^{-2}z+1)(z-y^2)}{(z^2-1)(z^2-\alpha^2)^2}
\prod_{a=1}^{2n}\frac{(\alpha^{-1}z\Lm_a-\alpha)}{(z-\Lm_a)},\\
G(z) &=
{\half}\frac{(z^2-2x^{-2}z+1)(z^2+\alpha^2-2y^2z)}{(z^2-1)(z^2-\alpha^2)}\sum_{a=1}^{2n}\frac1{z-\Lm_a},
\ea
\ee
observe that
\be
 V= \sum_{a=1}^{2n} \rez_{z=\Lm_a}\Bigl(\Psi(z)dz\Bigr).
 \ee
As
$\Psi(z) dz$
extends to a meromorphic 1-form on the Riemann sphere $\overline{\CC}$,
 the sum of its residues  over the poles in $\overline{\CC}$ is zero.
In addition to $z= \Lambda_a$ for $a=1,\dots, 2n$,
 $\Psi(z) dz$ possesses poles at $z= \pm 1, \pm \alpha,\infty$.
All residues
can be calculated straightforwardly.
In this way, using also the substitutions $\Lambda_j=e^{2\lambda_j}$, (\ref{xyalphsubs}) and elementary hyperbolic identities
like
$\sinh(\nu + \mu) \sinh(\nu - \mu) =  \sinh^2\nu - \sinh^2\mu$,
we obtain formula (\ref{pot}) for $V$.

To finish the derivation, we first rewrite (\ref{modtwsol}) as
\be\label{twk1}
\vert\tilde w_k\vert^2 =e^{-\mu}\left(e^{2 \q_k} - y^2\right) \frac{\sinh(\mu)}{\sinh(2\q_k)}
\prod_{\substack{i=1\\ (i\neq k)}}^{n}\left(\frac{\sinh(\q_k+\q_i+\mu)\sinh(\q_k-\q_i+\mu)}
{{\sinh(\q_k-\q_i)}\sinh(\q_k+\q_i)}\right)
\ee
 and
\be\label{twk2}
\vert\tilde w_{n+k}\vert^2 =e^{-\mu}\left(y^2 - e^{-2 \q_k}\right) \frac{\sinh(\mu)}{\sinh(2\q_k)}
\prod_{\substack{i=1\\ (i\neq k)}}^{n} \left(\frac{\sinh(\q_k+\q_i-\mu)\sinh(\q_k-\q_i-\mu)}
{{\sinh(\q_k-\q_i)}\sinh(\q_k+\q_i)}\right)
\ee
for $k=1,\dots, n$.
Substituting these in the second  term of (\ref{rawham})  and using again (\ref{xyalphsubs}) leads to
the claimed formula
(\ref{distilham}) for $\Phi_1^\red$.
\end{proof}

\section{Discussion}\label{discussion}

The Heisenberg double $\cM$ of the Poisson Lie group $K=\SU(2n)$, equipped
with the   Abelian Poisson algebras generated by $\{F_l\}$ and $\{\Phi_l\}$ (\ref{FPhidefs}), permits
Hamiltonian reduction  by the constraint in (\ref{constraint}).
All the functions $F_l$ and $\Phi_l$ are invariant with respect to the symmetry group
$K_+\times K_+$, and thus $\{F_l\}$ and $\{\Phi_l\}$ descend to
Abelian Poisson algebras on the reduced phase space $\cM_\red$ (\ref{Mred}), where
they engender two Liouville integrable systems.
The present paper continues the line of research  started in \cite{M} and
further advanced in \cite{FG,FG1,M1}.
The aim of these studies  is to achieve detailed understanding
of the integrable systems defined by the collections of reduced Hamiltonians $\{ F_l^\red\}$
and $\{\Phi_l^\red\}$ as well as their
 analogues obtained by using $\SU(n,n)$ instead of $\SU(2n)$ in the
 decompositions (\ref{lrdecomp}),(\ref{rldecomp}).
The pertinent reductions admit two natural models  for the reduced phase space,
which are associated with two systems of Darboux coordinates
on (dense open submanifolds of) $\cM_\red$.
The Darboux coordinates emerge from the eigenvalues of two matrices
complemented by their respective canonical conjugates. In our setting
these two matrices are $\Omega$ and $L$ (\ref{OmLdefs}).
The coordinates based on diagonalization of $L$ were described in \cite{FG},
following \cite{M}.
Here, we have constructed alternative Darboux coordinates
utilizing the eigenvalues $\Lambda_j= e^{2\lambda_j}$ of $\Omega$.

The canonical conjugates of the variables $\lambda_j$ are angles $\theta_j$,
parametrizing an $n$-torus ${\TT}^n$, but so
far we have not specified the range of the
eigenvalue-parameters $\lambda_j$: it will be proved in [29] that their full range is the closure
of the  convex polyhedron
\be
\cD_+^\lambda =
\{ \lambda\in {\RR}^n\mid \lambda_1> \lambda_2 > \dots > \lambda_n >
\operatorname{max}(\vert v\vert, \vert u \vert),\,\, \lambda_{i} - \lambda_{i+1}> \mu,\,\, i=1,\ldots, n-1\},
\label{Z1}\ee
where $\mu$, $u$ and $v$ are the constants (\ref{xyalphsubs}) appearing in the definition of the
constraint (\ref{constraint}).
The restriction of $\lambda$ to the domain $\cD_+^\lambda$ is a consequence of the facts
that the variables $\Lambda_j=e^{2\lambda_j}$ satisfy (\ref{Lambdaord})
  and that the functions $\vert \tilde w_a\vert^2$ in (\ref{modtwsol}) cannot be negative.
Indeed, these functions, exhibited also in (\ref{twk1})-(\ref{twk2}), are all positive precisely on the domain (\ref{Z1}).

We have seen that the reduced Hamiltonian $\Phi_1^\red$
takes the interesting RSvD  form (\ref{distilham}) in terms
of the Darboux coordinates  attached to
$\cD_+^\lambda \times {\TT}^n = \{ (\lambda, e^{\ri \theta})\}$.
On the other hand, in these coordinates
the reduced Hamiltonians $F_l^\red$  depend only on $\lambda$, as given by (\ref{Flred}).
This means that $\lambda_j, \theta_j$ are action-angle variables
for the Liouville integrable system $\{F_l^\red\}$, and
the $\theta$-tori are just the Liouville tori.
The boundary of the polyhedron $\cD_+^\lambda$ actually corresponds
to lower-dimensional Liouville tori.

Now we recall the other system of Darboux coordinates, denoted $(\hat p, \hat q)$ in \cite{FG}. The
$\hat q_j$ are  angles,  whereas the $\hat p_j$ are related
to the parameters $q_j$ of the generalized Cartan decomposition of
$k\in K$ utilized to obtain the formula (\ref{bLform}). Concretely \cite{M,FG}, we have
\be
e^{\hat p_j} = \sin(q_j).
\label{Z5}\ee
These variables encode the eigenvalues of
$L= k^\dagger I k I$ since $L$ is conjugate to the matrix
\be
\bpm\cos(2 q)& \ri\sin (2 q)\\ \ri\sin(2q)&\cos(2q)\epm,
\quad q =\diag(q_1,\dots, q_n).
\label{Z6}\ee
The range of the variables $\hat p_j$ can be shown  \cite{FG}\footnote{In this reference
the unnecessary assumption $v>u$ was made.}
to be the closure of the domain
\be
\cD_+^{\hat p} = \{\hat p\in{\RR}^n\,\mid
 \hat p_1 < \operatorname{min}(0,v-u),\,\,\, \hat p_j-\hat p_{j+1}>\mu\ (j=1,\dots,n-1)\}.
\label{Z7}\ee
The pair $(\hat p, e^{\ri \hat q})$ filling the domain $\cD_+^{\hat p} \times {\TT}^n$
yields Darboux coordinates on a dense open subset of $\cM_\red$,
and in these coordinates the Hamiltonians $\Phi_l^\red$ become trivial, while
$F_1^\red$ gives an interesting Hamiltonian of RSvD type. Specifically, one obtains
\be
\Phi_l^\red = \frac{1}{l}\sum_{j=1}^n \cos (2 l q_j(\hat p)),
\label{Z9}\ee
referring to (\ref{Z5}), and
\be
F_1^\red =U(\hat p) - \sum_{j=1}^n\cos(\hat q_j) U_1(\hat p_j)^{\tfrac{1}{2}}
\prod_{\substack{k=1\\(k\neq j)}}^n
\bigg[1-\frac{\sinh^2(\mu)}{\sinh^2(\hat p_j-\hat p_k)}
\bigg]^{\tfrac{1}{2}}
\label{Z9}\ee
with
\be
U(\hat p)=\frac{e^{-2u}+e^{2v}}{2}\sum_{j=1}^ne^{-2\hat p_j},\quad
U_1(\hat p_j) = \big[1-(1+e^{2(v-u)})e^{-2\hat p_j}
+e^{2(v-u)}e^{-4\hat p_j}\big].
\ee
Hence $\hat p_j, \hat q_j$ are action-angle
variables for the Liouville integrable system $\{\Phi_l^\red\}$, and the $\hat p_j$ serve
also as position variables for $F_1^\red$ (\ref{Z9}).
Incidentally, it is  manifest  from the identity
\be
U_1(\hat p_j) = 4 e^{v-u} e^{-2\hat p_j} \sinh(\hat p_j) \sinh(\hat p_j + u -v)
\ee
 that the Hamiltonian $F_1^\red$ (\ref{Z9}) is real on the domain (\ref{Z7}), as it must be
 on account of its action-angle form (\ref{Flred}).

We conclude from the above that the Liouville integrable systems
$\{ F_l^\red\}$ and $\{\Phi_l^\red\}$  \textit{ are in action-angle duality}.
Indeed,  $F_1^\red$ takes the RSvD
form (\ref{Z9}) in terms of the action-angle variables of $\{\Phi_l^\red\}$, and
$\Phi_1^\red$ is given by the other RSvD type formula (\ref{distilham}) in terms of
the action-angle variables
of $\{F_l^\red\}$.

As was  mentioned in the Introduction, the first systematic investigation of action-angle duality relied
on direct methods \cite{SR88, RIMS95}. Since then,  the reduction interpretation of most
(although still not all) examples of Ruijsenaars have been found, and also several new  cases
of action-angle duality were unearthed utilizing this method;
see \cite{JHEP, FK1, P,FG0} and references therein.
The present paper should be seen as a contribution to the research goal
to describe dual pairs for all RSvD type systems in reduction terms.

Global properties of the reduced phase space (\ref{Mred}) and consequences of the
duality for the dynamics  will be studied in
our subsequent publication \cite{LONG}.
The relation of $F_1^\red$ (\ref{Z9}) to the five-parameter family of RSvD Hamiltonians \cite{vD1}
was described in \cite{FG}, and in \cite{LONG}  we will also present such a
connection  for $\Phi_1^\red$ (\ref{distilham}).
We here only note (see Appendix A) that $\Phi_1^\red$
is a deformation of the action-angle dual of the trigonometric $\BC_n$
Sutherland Hamiltonian, as must be the case since $F_1^\red$ can be viewed as a
deformation of the latter \cite{M,FG}.

We wish to point out that their  reduction origin naturally associates Lax matrices
to the models obtained, basically because $\Omega$ and  $L$ (\ref{OmLdefs}) generate the commuting
Hamiltonians ({\ref{FPhidefs})  before reduction.  Recently there appeared new results
about Lax matrices for certain hyperbolic RSvD models \cite{PG}, and it  would be interesting to compare
those with the Lax matrices that arise in our setting.

We finally remark that the quantum mechanical (bispectral)  analogue of
our dual pair should be understood.
The recent paper by van Diejen and  Emsiz \cite{vDE} is certainly
relevant for finding the answer to this question.
We hope that our investigations will be developed in several directions in the future,
including bispectral aspects withal.

\bigskip\bigskip \noindent\bf Acknowledgements. \rm
This work was supported in part by the Hungarian Scientific Research
Fund (OTKA) under the grant K-111697.

\newpage
\renewcommand{\thesection}{\Alph{section}}
\setcounter{section}{0}
\section{Connection to the dual of the $\BC_n$ Sutherland model}

In this appendix we present the ``cotangent bundle limit'' of the Hamiltonian
$\Phi_1^\red$ (\ref{distilham}).
We find it convenient to introduce  the notation
\be
H_1(\lambda, \theta; u, v, \mu) := \Phi_1^\red(\lambda, \theta)
\ee
containing the coupling parameters $u,v,\mu$ as given in (\ref{distilham}).
Let us now take any positive parameter $r$ and consider the one-parameter family
of Hamiltonians
\be
H_r(\lambda, \theta; u,v,\mu) := H_1(r \lambda, \theta; r u, rv, r\mu),
\ee
which are defined, for any $r>0$,  on the same domain $\cD_+^\lambda \times {\TT}^n$ (\ref{Z1})
as $H_1$.
It is easy to check that $H_r$ has a limit as $r\to 0$.
Indeed, we obtain
\be
\lim_{r\to 0} H_r(\lambda, \theta; u,v,\mu) = H_0(\lambda, \theta; u, v,\mu)
\ee
with
\be\label{limitPhi1}
\ba
 H_0 (\lambda,\theta; u,v,\mu)&=
V_0(\lambda;u,v,\mu) +  \sum_{k=1}^n \cos(\theta_k)
\left[1 - \frac{v^2}{\lambda_k^2}\right]^{1/2} \left[1 - \frac{u^2}{\lambda_k^2} \right]^{1/2}\\
&\qquad\times
\prod_{\substack{l=1\\(l\neq k)}}^n \left[1 - \frac{\mu^2}{(\lambda_k - \lambda_l)^2}\right]^{1/2}
\left[1 - \frac{\mu^2}{(\lambda_k + \lambda_l)^2}\right]^{1/2}
\ea
\ee
where
\be
V_0(\lambda; u, v,\mu) = \frac{uv}{\mu^2}
\prod_{k=1}^n\left[1 - \frac{\mu^2}{\lambda_k^2} \right]
-\frac{uv}{\mu^2}.
\label{pot0}
\ee
The limiting Hamiltonian $H_0$  can be recognised as the action-angle dual of the standard trigonometric
$\BC_n$ Hamiltonian.   The latter was derived  in \cite{FG0}
 by reduction of the cotangent bundle of $T^* \U(2n)$,
and was denoted there by $\tilde H^0$.  Concretely, the correspondence with the notations used in equation (1.4) of
\cite{FG0}  is
\be
H_0(\lambda, \theta; u,v,2\mu) = \tilde H^0(\lambda, \vartheta; \kappa, \nu, \mu)
\ee
 under the substitutions
\be
 u \to - \kappa,  \quad v \to \nu, \quad   \theta \to \vartheta.
\ee
We note for completeness that \cite{FG0} adopted the inessential condition $\nu > \vert \kappa\vert \geq 0$.


\begin{thebibliography}{5}

\bibitem{Cal}
F. Calogero,
\textit{Solution of the one-dimensional $N$-body problem with quadratic and/or
inversely quadratic pair potentials},
J. Math. Phys. {\bf 12}, 419-436 (1971)

\bibitem{Suth}
B.~Sutherland,
\textit{Exact results for a quantum many-body problem in one dimension},
Phys. Rev. A {\bf 4}, 2019-2021 (1971)

\bibitem{Mos}
J.~Moser,
\textit{Three integrable Hamiltonian systems connected with isospectral deformations},
Adv. Math. {\bf 16}, 197-220 (1975)


\bibitem{OP1}
M.A.~Olshanetsky and A.M.~Perelomov,
\textit{Classical integrable finite-dimensional systems related to Lie algebras}, Phys.
Rep. {\bf 71}, 313-400 (1981)

\bibitem{RS86}
S.N.M.~Ruijsenaars and H.~Schneider,
\textit{A new class of integrable systems and its relation to solitons},
Ann. Phys. {\bf 170}, 370-405 (1986)

\bibitem{vD1}
J.F.~van Diejen,
\textit{Deformations of Calogero-Moser systems},
Theor. Math. Phys. {\bf 99}, 549-554 (1994); {\tt arXiv:solv-int/9310001}

\bibitem{N}
N.~Nekrasov,
\textit{Infinite-dimensional algebras, many-body systems and gauge theories},
pp.~263-299 in: Moscow Seminar in Mathematical Physics, AMS Transl. Ser. 2, Vol.~{\bf 191},
American Mathematical Society, 1999

\bibitem{Banff}
S.N.M.~Ruijsenaars, \textit{Systems of Calogero-Moser type}, pp. 251-352
in: Proceedings of the 1994 CRM-Banff Summer School Particles and Fields,
Springer, 1999

\bibitem{vDV}
J.F. van Diejen and L. Vinet (Editors), Calogero-Moser-Sutherland Models,
Springer, 2000

\bibitem{SuthR}
B.~Sutherland, Beautiful Models, Word Scientific, 2004

\bibitem{PolR}
A.P.~Polychronakos,
\textit{Physics and mathematics of Calogero particles},
J. Phys. A: Math. Gen. {\bf 39}, 12793-12845 (2006);
{\tt arXiv:hep-th/0607033}

\bibitem{EtiR}
 P.~Etingof,
Calogero-Moser Systems and Representation Theory,
European Mathematical Society, 2007

\bibitem{SR88}
S.N.M.~Ruijsenaars,
\textit{Action-angle maps and scattering theory for some finite-dimensional
integrable systems. I. The pure soliton case},
Commun. Math. Phys. {\bf 115}, 127-165 (1988)

\bibitem{RIMS95}
S.N.M.~Ruijsenaars,
\textit{Action-angle maps and scattering theory for some finite-dimensional
integrable systems III. Sutherland type systems and their duals},
Publ. RIMS {\bf 31}, 247-353 (1995)


\bibitem{KKS}
D.~Kazhdan, B.~Kostant and S.~Sternberg,
\textit{Hamiltonian group actions and dynamical systems of Calogero type},
Comm. Pure Appl. Math. {\bf XXXI}, 481-507 (1978)

\bibitem{JHEP}
V.~Fock,  A.~Gorsky, N.~Nekrasov and V.~Rubtsov,
\textit{Duality in integrable systems and gauge theories},
JHEP {\bf 07}, 028 (2000); {\tt arXiv:hep-th/9906235}

\bibitem{FK1}
L.~Feh\'er and C.~Klim\v c\'ik,
\textit{Poisson-Lie interpretation of trigonometric Ruijsenaars duality},
Commun. Math. Phys. {\bf 301}, 55-104 (2011); {\tt arXiv:0906.4198 [math-ph]}

\bibitem{M}
I.~Marshall,
\textit{A new model in the Calogero-Ruijsenaars family},
Commun. Math. Phys. {\bf 338}, 563-587 (2015); {\tt arXiv:1311.4641 [math-ph]}


\bibitem{FG}
L.~Feh\'er and T.F.~G\"orbe,
\textit{On a Poisson-Lie deformation of the BC(n) Sutherland system},
Nucl. Phys. B {\bf 901}, 85-114 (2015); {\tt arXiv:1508.04991 [math-ph]}

\bibitem{STS}
M.A.~Semenov-Tian-Shansky,
\textit{Dressing transformations and Poisson groups actions},
Publ. RIMS {\bf 21}, 1237-1260 (1985)

\bibitem{FG1}
L.~Feh\'er and T.F.~G\"orbe,
\textit{The full phase space of a model in the Calogero-Ruijsenaars family}, J. Geom. Phys.
{\bf 115},  139-149 (2017); {\tt arXiv:1603.02877 [math-ph]}

\bibitem{M1}
I.~Marshall,
\textit{ Spectral parameter dependent Lax pairs for systems of Calogero-Moser type},
Lett. Math. Phys. {\bf 107}, 619-642 (2017)


\bibitem{FP}
L.~Feh\'er and B.G. Pusztai,
\textit{A class of Calogero type reductions of free motion on a simple Lie group},
Lett. Math. Phys. {\bf 79}, 263-277 (2007); {\tt arXiv:math-ph/0609085}

\bibitem{P}
B.G.~Pusztai,
\textit{The hyperbolic BC(n) Sutherland and the rational BC(n)
Ruijsenaars-Schneider-van Diejen models: Lax matrices and duality},
Nucl. Phys. B  {\bf 856}, 528-551 (2012); {\tt arXiv:1109.0446 [math-ph]}

\bibitem{FG0}
L.~Feh\'er and T.F.~G\"orbe,
\textit{Duality between the trigonometric $BC_n$ Sutherland system and a completed
rational Ruijsenaars-Schneider-van Diejen system},
J. Math. Phys. {\bf 55}, 102704 (2014); {\tt arXiv:1407.2057 [math-ph]}

\bibitem{W}
G.~Wilson,
\textit{Collisions of Calogero-Moser particles and an adelic Grassmannian (with an
Appendix by I. G. Macdonald)}, Invent. Math. {\bf 133}, 1–41 (1998)



\bibitem{STSlectures}
M.A.~Semenov-Tian-Shansky,
\textit{Integrable systems: an r-matrix approach},  Kyoto preprint RIMS-1650, 2008,
www.kurims.kyoto-u.ac.jp/preprint/file/RIMS1650.pdf


\bibitem{Lu}
J-H.~Lu,
\textit{Momentum mappings and reduction of Poisson actions},
pp.~209-226 in: Symplectic Geometry, Groupoids, and Integrable Systems,
 Springer, 1991

\bibitem{LONG}
L.~Feh\'er and I. Marshall, preprint in preparation

\bibitem{PG}
B.G.~Pusztai and T.F.~G\"orbe,
\textit{Lax representation of the hyperbolic van Diejen dynamics with two coupling parameters},
to appear in Commun. Math. Phys.,  {\tt arXiv:1603.06710 [math-ph]}


\bibitem{vDE}
J.F.~van Diejen and E.~Emsiz,
\textit{Spectrum and eigenfunctions of the lattice hyperbolic Ruijsenaars-Schneider
system with exponential Morse term},
Annales Henri Poincar\'e {\bf 17}, 1615-1629 (2016); {\tt arXiv:1508.03829 [math-ph]}


\end{thebibliography}
\end{document}